\documentclass[onecolumn, print]{ieeecolor}
\usepackage{generic}
\usepackage{cite}
\usepackage{amsmath,amssymb,amsfonts}
\usepackage{mathtools}
\usepackage{graphicx}
\usepackage{algorithm,algorithmic}
\usepackage{hyperref}
\hypersetup{hidelinks=true}
\usepackage{textcomp}

\usepackage{enumitem}
\usepackage{comment}
\usepackage[font=footnotesize]{subcaption} 
\usepackage[font=footnotesize]{caption}

\usepackage{amsthm}
\usepackage{mathtools}
\theoremstyle{definition}
\newtheorem{definition}{Definition}

\theoremstyle{plain}
\newtheorem{assumption}{Assumption}
\newtheorem{lemma}{Lemma}
\newtheorem{theorem}{Theorem}

\newcommand{\gamename}{ToW }
\newcommand{\commalgoname}{Tug-of-Peace }
\newcommand{\nocommalgoname}{Fully Distributed Tug-of-Peace }
\newcommand{\NN}{\mathcal{N}}
\newcommand{\XX}{\mathcal{X}}
\newcommand{\RR}{\mathbb{R}}
\newcommand{\xx}{\boldsymbol{x}}
\newcommand{\yy}{\boldsymbol{y}}
\newcommand{\zz}{\boldsymbol{z}}
\newcommand{\gbold}{\boldsymbol{g}}

\newcommand{\blambda}{\boldsymbol{\lambda}}
\newcommand{\tlambda}{\overline{\lambda}}
\newcommand{\btlambda}{\boldsymbol{\overline{\lambda}}}
\newcommand{\bdelta}{\boldsymbol{\delta}}
\newcommand{\intXX}{\XX^{\textrm{o}}}
\newcommand{\expec}{\mathbb{E}}
\newcommand{\FF}{\mathcal{F}}
\newcommand{\GG}{\mathcal{G}}
\newcommand{\MM}{\mathbf{M}}
\newcommand{\II}{\mathbb{I}}
\newcommand{\AAA}{\mathcal{A}}
\newcommand{\ZZ}{\mathcal{Z}}

\def\BibTeX{{\rm B\kern-.05em{\sc i\kern-.025em b}\kern-.08em
    T\kern-.1667em\lower.7ex\hbox{E}\kern-.125emX}}

\begin{document}
\title{Choose Your Battles: Distributed Learning Over Multiple Tug of War Games}
\author{Siddharth Chandak, Ilai Bistritz, Nicholas Bambos
\thanks{\hrule\medskip Siddharth Chandak and Nicholas Bambos are with the Department of Electrical Engineering, Stanford University, CA, USA. Ilai Bistritz is with the School of Industrial and Intelligent Systems Engineering and the School of Electrical and Computer Engineering, Tel Aviv University, Israel. This research was supported by the Koret Foundation grant for Smart
Cities and Digital Living. Emails: chandaks@stanford.edu, ilaibistritz@tauex.tau.ac.il, bambos@stanford.edu}
}

\maketitle

\begin{abstract}
Consider $N$ players and $K$ games taking place simultaneously. Each of these games is modeled as a Tug-of-War (ToW) game where increasing the action of one player decreases the reward for all other players. Each player participates in only one game at any given time. At each time step, a player decides the game in which they wish to participate in and the action they take in that game. Their reward depends on the actions of all players that are in the same game. This system of $K$ games is termed a `Meta Tug-of-War' (Meta-ToW) game. These games can model scenarios such as power control, distributed task allocation, and activation in sensor networks. We propose the Meta Tug-of-Peace algorithm, a distributed algorithm where the action updates are done using a simple stochastic approximation algorithm, and the decision to switch games is made using an infrequent 1-bit communication between the players. We prove that in Meta-ToW games, our algorithm converges to an equilibrium that satisfies a target Quality of Service reward vector for the players. We then demonstrate the efficacy of our algorithm through simulations for the scenarios mentioned above. 
\end{abstract}


\section{INTRODUCTION}
In many network scenarios, a conflict arises between the agents regarding how to share resources and tasks. Examples are transmitter power control in wireless networks, activation probability in sensor networks, and task allocation in multi-robot networks. Each agent has a local performance measure, such as its throughput, energy consumption, or progress rate. When a wireless device uses more power to transmit, it increases its throughput but decreases that of others. When a sensor, which collects its own data but also relays that of others, is activated with a lower probability, it improves its energy consumption but worsens the packet loss rate of others. When a robot spends effort on a task, it increases its progress rate but decreases the marginal contribution of others, possibly pushing them to switch to a task where they can be more efficient. These conflicts can naturally be modeled as a game between the agents. 

A conflict between agents, however, does not mean that agents are selfish. The agents are programmable and run the protocol we design (e.g., WiFi or 5G). Moreover, the agents (or players) often have a minimum requirement of reward, which we term as their Quality of Service (QoS). 

Finding an action profile that satisfies the QoS guarantees is an optimization problem that can be solved by a centralized server that instructs the agents how to play. However, such a centralized server would have to know the reward functions and action sets, which implies that agents must send this information to the server. The server then needs to compute the solution in real-time and send it back to the agents. This creates latencies and limits the system's scalability, since the complexity of the optimization problem grows fast with the number of agents. Such a centralized scheme also violates user privacy and creates a security vulnerability, since if the server is hacked, the hacker gains control of all agents. Therefore, distributed protocols are preferred \cite{power_foschini,power_zhang,power_eigen_3, power_ulukus,power_yates, mandal2025distributed, qu2019distributed}.

We wish to design a distributed algorithm that allows each player to achieve their QoS asymptotically. The challenge in designing such a protocol is that players often just observe \textit{stochastic bandit feedback}, i.e., a noisy version of the reward at the current action profile. The players do not know their reward functions and do not observe the actions or rewards of others. Hence, the challenge is to know whether to ``insist'' on improving their local reward or give up in favor of others. In large-scale networks with thousands or millions of devices, this coordination challenge is the primary design bottleneck. 

In this paper, we present a \textit{meta game} where each player decides the game they wish to participate in \textit{and} the action they take in that game. In any particular game, increasing the action of one player decreases the rewards of other players in the same game. We name this class of games \textit{Tug-of-War (ToW)} games, since they resemble players pulling a rope in opposite directions. We show that transmission power control, tasks with marginally diminishing utilities, and activation in sensor networks can be modeled as ToW games (Section \ref{sec:app}). We term the multiple game system a \textit{Meta-ToW} game, owing to the `meta' level where players decide which game to join. This can model scenarios with orthogonal resources or tasks. These Tug-of-War games capture a common interaction in networking applications and allow distributed and scalable coordination for players seeking to learn QoS guarantees. 

We first design the Tug-of-Peace algorithm (Algorithm \ref{commalgo}), a simple distributed algorithm for a ToW game. This algorithm converges with probability 1 to an action profile that satisfies the given QoS requirements for all players, if the QoS requirements are feasible. Additionally, we show that with high probability, Algorithm \ref{commalgo} converges to the \textit{minimal action profile} (Definition \ref{def:minimal}) which satisfies the QoS requirements. This algorithm requires infrequent (i.e., finitely many), one-bit communications. We also present an algorithm (Algorithm\ \ref{nocommalgo}) for the case where even one-bit communication is not possible. We show that this algorithm also converges to the minimal action profile with high probability. We name these algorithms \textit{Tug-of-Peace (ToP)} since, instead of pulling the rope as hard as they can, players learn to place the marker at a point where they can all achieve their target QoS. 

 In the meta game, the players need to find a configuration of players to different games such that the QoS requirements are feasible in that configuration. We design the Meta-ToP algorithm (Algorithm \ref{meta-algo}), a distributed algorithm in which players use the infrequent one-bit communication to communicate to other players that the QoS requirement `may' not be feasible in the current configuration of players. On receiving these communications, the players switch to different games, allowing players to try different configurations. We show that our algorithm attains a configuration and converges to an action profile that satisfies the QoS requirements for all players with probability 1. Moreover, we show that the game switches happen only a finite number of times. Finally, we demonstrate in simulations the performance of our algorithms for power control, distributed task allocation and sensor activation.  

\subsection{Related Work}

Distributed algorithms that converge to a Nash equilibrium (NE) have been extensively studied for various games \cite{ye2017distributed,gadjov2018passivity,lin2023statistical,frihauf2011nash,tatarenko2020geometric}. The NE can be globally inefficient, and in particular, does not satisfy any QoS guarantees. Instead of a NE, the equilibrium our players seek in this paper is an action profile where all of their rewards are above the QoS threshold.

Distributed protocols with QoS guarantees for general games are much less studied compared to distributed algorithms that converge to NE. The work in \cite{bistritz2021one} considered a multiplayer bandit scenario where, if multiple players pick the same arm, they all receive zero reward. This is a special case of a game that is different from the class of games we consider.  The work in \cite{bistritz-queue} considers a general discrete game, but the algorithm uses regular communication between the players. Compared to both algorithms in \cite{bistritz2021one, bistritz-queue}, our proposed algorithm here has substantially better scalability in the number of players. Furthermore, our algorithm has two variants that either require 1-bit of infrequent communication or no communication at all. QoS guarantees were also studied in \cite{Borkar-QoS}, but in the context of average cost setting in a multi-agent Markov decision process.

Distributed protocols with QoS guarantees are more common for power control in wireless networks, specifically \cite{power_foschini,power_zhang,power_eigen_3, power_ulukus,power_yates}. The power control game is a special case of the Meta-ToW games we consider in this paper, as explained in Subsection \ref{app-power}. The power control QoS algorithms exploit the form of the reward function, which they assume is known. Our algorithm, on the other hand, does not need to know the reward functions as long as the game is a ToW game. 

The results of \cite{power_ulukus} provide an algorithm for power control, where assumptions are made on the knowledge available to the players. The algorithm needs each player to know the channel gain between itself and its intended receiver. \cite{power_foschini} removes this assumption but fails to converge in the presence of noise (as shown by \cite{power_zhang}). The results in \cite{power_zhang, power_eigen_3} are the closest to our algorithm as they are also stochastic approximation algorithms that can handle noise and do not assume further knowledge of the system. But they have different assumptions as they require an unbiased estimator of the inverse reward, i.e., the inverse of the Signal-to-Interference and Noise Ratio (SINR).

The analysis methods of \cite{power_foschini,power_zhang,power_eigen_3, power_ulukus} are specific to the power control wireless scenario. Their analysis depends on the eigenvalues of a matrix defined using the channel gain matrix and the QoS vector. As an example, \cite{power_zhang} models the problem as finding the solutions of a linear system, and its analysis then depends on the eigenvalue properties of this linear system. This model and properties are specific to that scenario and cannot be extended to our broader class of games. Other works, such as \cite{power_yates}, take a general approach, but also with certain limitations. They use the properties of monotone and sub-homogeneous maps to show the convergence of the algorithm they design. These properties are facilitated by the structure of power control games. However, designing an algorithm that satisfies these properties is challenging for other general games. Additionally, our algorithm is better equipped to deal with noise than the algorithm by \cite{power_yates}, as we show later. 



A key step in the analysis in \cite{power_zhang,power_foschini,power_yates} is that their assumptions guarantee the existence of a unique equilibrium that satisfies $u_n(\xx)=\lambda_n$ for all $n$, where $u_n(\xx)$ is the reward received by player $n$ at action profile $\xx$ and $\lambda_n$ is their QoS requirement. Furthermore, this point is guaranteed to be globally asymptotically stable. However, the existence of a unique and globally asymptotically stable equilibrium is not guaranteed in general ToW games. Moreover, the action sets in a ToW game are bounded, which gives rise to undesirable equilibria at the boundaries. Due to the observation noise, it is tricky to guarantee which equilibrium the algorithm would converge to. As we later see in Section \ref{sec: algo}, we give guarantees on convergence, and results which show that our algorithms converge to the ``best'' equilibrium with high probability. 

Given the general analysis and because we deal with unknown reward functions, our algorithm and its guarantees carry over to applications beyond power control with no modifications needed. One such example is activation in sensor networks, which we study in Subsection \ref{app-acti}. In this application, the observations are noisy and it is unreasonable to assume knowledge of the reward functions.

A Meta-ToW game consists of multiple games that players can join and leave. Thus, our work is related to the growing literature on open multi-agent networks\cite{deplano2025optimization,bistritz2024gamekeeper,liu2025online}.

Our analysis employs the Ordinary Differential Equation (ODE) approach to stochastic approximation \cite{Borkar-book} and the concept of Cooperative ODEs and Monotone Dynamical Systems \cite{coop-review}. Other than in dynamical systems, cooperative ODEs have been widely used in fields such as epidemiology \cite{coop_use_epi_1,coop_use_epi_2}, social networks \cite{coop_use_social_1,coop_use_social_2}, and queuing systems \cite{Borkar-queue}.

Our preliminary conference paper  \cite{chandak2023tug} studied the special case of a single ToW game. This extended paper includes the analysis that was all omitted from \cite{chandak2023tug}. Moreover, we generalize the scope of \cite{chandak2023tug} to Meta-ToW games where each player chooses both the game they wish to participate in and the action they take. This allows for modeling multichannel power control and task allocation, going beyond \cite{chandak2023tug}.

\subsection{Notation}
We use bold letters to denote vectors and matrices and $\mathbf{0}_N$ and $\bdelta_N$ to denote the $N$-dimensional all-zero and all-$\delta$ vectors, respectively. We use $\Pi_\XX$ to denote the Euclidean projection into the set $\XX$. We use the standard vector inequalities, where $\xx\leq\yy$ denotes $x_i\leq y_i$ for all $i$ and  $\xx\in[\zz,\yy]$ denotes $z_i\leq x_i\leq y_i$ for all $i$. We use the standard game-theoretic notation where $\xx_{-n}$ is the vector of actions for all players except player $n$. $\|\cdot\|$ denotes the Euclidean norm when applied on a vector, and the operator norm for matrices (under the Euclidean norm for vectors). 

\section{Problem Formulation}
Consider a set of $N$ players $\NN=\{1,\ldots, N\}$ and a set of $K$ games $\GG=\{1,\ldots,K\}$. Each player $n$ decides the game $g_n\in\GG$ they wish to participate in and the action $x_n\in\XX_n$ they take in that game. Here $\XX_n\coloneqq [0,B_n]\subseteq{\RR^+}$. Let $\XX=\XX_1\times\ldots\times\XX_N$. Define the interior of this region as $\intXX= (0,B_1)\times\ldots\times(0,B_N)$. Each player $n$ has a reward function $u_n(\gbold,\xx)$ that depends only on the actions of the players in the game $g_n$. Here $\gbold\in\GG^N$ is the game configuration vector, which indicates the game each player is in, i.e., $\gbold=[g_1,\ldots,g_N]$ and $\xx$ is the action profile for all players, i.e., $\xx=[x_1,\ldots,x_N]$.

Define the Quality of Service (QoS) vector $\blambda=[\lambda_1,\ldots,\lambda_N]$. Our goal is to design a distributed algorithm such that the players, under \textit{noisy bandit feedback}, converge to a game configuration and action profile such that
\begin{equation}
    u_n(\gbold, \xx) \geq \lambda_n,\quad \forall n \in \mathcal{N}. \label{eq:1}
\end{equation}
We next explain the noisy bandit feedback and the required assumptions on the QoS vector.

Each player chooses one game $g_n(t)$ to be in, and their action $x_n(t)$ in this game, at each timestep $t\geq 0$. The player then receives a noisy reward $y_n(t)=u_n(\gbold(t),\xx(t))+M_n(t)$. For example, in communication systems, the reward could be the probability to successfully transmit a packet, while the noisy feedback is ACK/NACK. The players receive no information about other players' games, actions, and rewards. They also cannot observe the rewards they would have received in other games. Here $\MM(t)=[M_1(t),\ldots,M_N(t)]$ denotes the noise sequence and satisfies the following assumption.
\begin{assumption}\label{assu-martingale}
    $\MM(t)$ is a martingale difference sequence, i.e., $\expec[\MM(t)|\FF_{t-1}]=0$ where $\FF_{t}\coloneqq \sigma(\xx(s), \gbold(s), \MM(s), s\leq t)$. Additionally, $\MM(t)$ has bounded support, i.e., $|M_n(t)|\leq \widehat{M}$, w.p.\ $1$, for all $n\in\NN, t\geq0$ for some positive $\widehat{M}$.
\end{assumption}

Not all QoS vectors $\blambda$ are feasible, and hence we make the following feasibility assumption.
\begin{assumption}\label{assu-delta}
We make the following assumptions.
    \begin{enumerate}[label=(\alph*)]
        \item There exist $\xx\in\intXX, \gbold\in\GG^N$ and $\delta>0$ such that $u_n(\gbold,\xx)\geq \lambda_n+\delta$ for all players $n$.
        \item Let $\xx_g$ denote the vector of actions of players that are in game $g$ and $Du(\xx_g)$ denote the Jacobian at $\xx_g$, i.e., the Jacobian for players' actions that are in game $g$. Suppose $\btlambda$ is chosen uniformly at random in $[\blambda,\blambda+\bdelta_N]$. If $u(\hat{\gbold},\hat{\xx})=\btlambda$, then w.p.\ 1 the Jacobian $Du(\hat{\xx}_g)$ has no purely imaginary eigenvalues for all $g\in\GG$, where the game configuration is given by vector $\hat{\gbold}$.
    \end{enumerate}
\end{assumption}

Part (a) of this assumption ensures that the QoS vector is feasible and can be achieved in the interior of the set $\XX$. Here $\delta$ can be infinitesimally small. 

The parameter $\delta$ is also needed for technical reasons. For an arbitrary QoS vector $\blambda$, the QoS might be achieved at an equilibrium point that is not stable for our algorithm \cite{hirsch_smale}. We later show in Lemma \ref{lemma:stable} that under Assumption \ref{assu-delta}, the corresponding equilibrium point is stable with probability $1$ when the QoS vector is drawn uniformly at random from $[\blambda,\blambda+\bdelta_N]$. 

A key requirement for this result is part (b), which is a mild nondegeneracy assumption. Intuitively, this condition is not restrictive: for non-zero purely imaginary eigenvalues, an arbitrarily small perturbation typically introduces a non-zero real part \cite{Borkar-queue}, \cite[Section 8.4]{hirsch_smale}. Thus, for a randomly chosen $\btlambda$, one would generally expect the Jacobian $Du(\hat{\xx}_g)$ not to have purely imaginary eigenvalues.

\subsection{Meta Tug-of-War Games}
In general, a distributed algorithm that solves the QoS problem defined above may require significant coordination among players \cite{bistritz-queue}. Remarkably, we identified a wide class of games where a simple distributed algorithm provably converges to an action profile with QoS guarantees. We term this class of games Meta-ToW, which naturally models common resource conflicts in a network of agents. We start by defining a single ToW game. As there is only a single game, we use $u_n(\xx)$ to denote the reward for player $n$.
\begin{definition}\label{def-game}
A game is a \textbf{Tug-of-War (ToW) game} if for all players $n\in\NN$, the reward function $u_n(\cdot)$ is continuously differentiable and satisfies the following condition:
\begin{equation}
\frac{\partial u_n(\xx)}{\partial x_m}<0, \forall \; m\neq n\in{\NN}, \xx \in\intXX.
\end{equation}
Furthermore, we assume that for all $n$, $u_n(\xx)=0$ if $x_n=0$ and $u_n(\xx)\geq0, \forall n\in\NN, \xx\in\XX$.
\end{definition}
Intuitively, these are games where if a player increases their action keeping all else constant, then the reward of other players would drop. A broad class of games satisfying this condition are resource allocation games \cite{Agrawal, Bistritz_res_allo}. In such games, the action taken by a player is the amount of resource used by that player; if a player increases their resource use then other players' reward would drop if they maintain their resource use. Definition \ref{def-game} makes no assumption on the impact of a player's reward upon changing their own action (i.e., no assumption on $\partial u_n/\partial x_n$). This flexible definition can easily model many practical games, as we detail in Section \ref{app-acti}. 

There could be multiple action profiles that meet the QoS guarantees. Since in ToW games the player's action is a measure of ``pull'' (e.g., power), we prefer equilibrium points with smaller actions.

We now define Meta-ToW games. 
\begin{definition}\label{defn:meta-ToW}
    In a \textbf{Meta Tug-of-War (Meta-ToW)} game, for all players $n\in\NN$, the reward function is continuously differentiable in $\xx$ and satisfies the following.
    \begin{itemize}
        \item For $m\neq n$, if $g_m=g_n$, i.e., they are in the same game,
            $$\frac{\partial u_n(\gbold,\xx)}{\partial x_m}<0, \xx\in\intXX$$
        \item For $m\neq n$, if $g_m\neq g_n$, i.e., they are in different games, 
            $$\frac{\partial u_n(\gbold,\xx)}{\partial x_m}=0, \xx\in\XX.$$
    \end{itemize}
    The reward function also satisfies $u_n(\gbold,\xx)=0$, if $x_n=0$ and $u_n(\gbold,\xx)\geq 0$ for all $n,\xx\in\XX$ and $\gbold\in\GG^N$. 
\end{definition}
 
\section{Applications}\label{sec:app}
In this section, we detail three useful special cases of Meta-ToW games and the QoS achievability problem.

\subsection{Multi-Channel Power Control in Wireless Networks}\label{app-power}
In power control, the players are $N$ transmitter-receiver pairs, so transmitter $n$ wishes to transmit to receiver $n$. There are $K$ channels, and at each timestep, each player $n$ decides the channel $g_n$ to transmit over. For each player $n$, the action $x_n$ denotes the transmission power of player $n$ in channel $g_n$. The interference experienced by receiver $n$ is given by $$I_n(\gbold, \xx)=\sum_{m\neq n: g_m=g_n} c_{m,n}x_m,$$ where $c_{m,n}>0$ is the channel gain between the transmitter of player $m$ and receiver of player $n$. Each receiver also faces additive Gaussian noise with variance $N_0$. Then the utility of player $n$ is the Signal-to-Interference Ratio (SINR) given by:
\begin{equation}
    u_n(\gbold,\xx)=\frac{c_{n,n}x_n}{N_0+I_n(\gbold, \xx)}.
\end{equation}

A common objective of power control in wireless networks is to find the minimum transmission power (or action profile) for each player such that all players satisfy their QoS requirement. Centralized control is infeasible due to issues such as latency, communication overhead, and lack of infrastructure. 

In a distributed setting, each player can only obtain a noisy estimate of their SINR \cite{SIR-est}. The estimation noise can be modeled as Martingale difference noise, as shown in \cite{power_zhang}. This game is a ToW game since the utility function satisfies:
$$\frac{\partial u_n(\gbold, \xx)}{\partial x_m}=-\frac{c_{n,n}c_{m,n}x_n}{(N_0+I_n(\gbold, \xx))^2}<0$$
for all $m\neq n$ such that $g_m=g_n$ and $\xx\in\XX^o$. There are multiple different assumptions on $\blambda$ and the matrix $C=[c_{m,n}]$ in the literature \cite{power_zhang, power_foschini} which all result in the existence of a point that satisfies the QoS requirement. Taking the boundary $B_n$ to be large enough for each player $n$ would ensure that the QoS is achieved in the bounded region $[0,B_n]$. Moreover, in the case of a single channel (or equivalently a single game in our framework) these assumptions also result in the uniqueness of the equilibrium point $\hat{\xx}$ which satisfies $u_n(\xx)=\lambda_n$ for all players $n$. Trivially, this point is also the minimal equilibrium point.

In wireless networks, the locations of the devices, and thus the channel gains, are typically modeled as random \cite{power_foschini}. Then, there is no need for Assumption \ref{assu-delta} as the randomness in the channel gain matrix $C$ has a similar effect on the Jacobian's eigenvalues.  

\subsection{Distributed Task Allocation with Cooperative Agents}
Consider $N$ agents who collaborate to perform $K$ tasks \cite{mandal2025distributed,qu2019distributed}. Each player decides which task to work on and how much effort they make on this task. The total utility obtained from a task depends on all the agents' efforts working on that task, which can be thought of as the value of the progress rate of this task. This total utility is then divided between the agents working on that task, which represents their marginal contribution to the total value attributed to the progress in the task. This total utility is monotonically increasing and concave with respect to the total effort spent on that task. Progress rate typically increases sublinearly in effort or linearly at best if there is no overhead in collaboration. Then, the value of increased progress usually has diminishing returns. For example, computing a job in one week instead of two is a much more significant boost than computing a job in one minute instead of two. This reward structure incentivizes agents to choose tasks where their marginal contribution is more significant. This reward, which measures the local progress of an agent, can be naturally measured locally, as opposed to the global progress. Examples for applications are LLM agents, cloud computing, and multirobot manufacturing or construction.  

As a concrete example, let the total utility obtained for task $g\in\{1,\ldots,K\}$ be $$U_g(\gbold,\xx)=\log\left(\alpha_g+\sum_{m\in\NN, g_m=g}\beta_{m,g}x_m\right),$$
for some constant $\alpha_g\geq 1$, and $\beta_{m,g}\geq 0$. The constant $\beta_{m,g}$ denotes the proficiency of agent $m$ in performing task $g$. The reward for player $n$ is then given by
\begin{equation}\label{utility-tasks}
    u_n(\gbold,\xx)=\left(\frac{\beta_{n,g_n}x_n}{\sum_{m\in\NN, g_m=g_n}\beta_{m,g_m}x_m}\right)U_{g_n}(\gbold,\xx),
\end{equation}
which is the agent's relative part in the overall task progress rate. When the agent increases its action, the rewards of others decrease (they now contribute less to the new progress rate). Thus, the above game is a Meta-ToW game. The QoS requirements allow the operator to ensure that all robots are utilized efficiently, as their marginal contribution is guaranteed. The sum of the QoS vector guarantees the global progress rate. 

\subsection{Activation in Sensor Networks}\label{app-acti}

Consider $N$ sensors that communicate over a wireless network. Each sensor collects data and transmits these observations to a destination through the communication network, i.e., they only communicate with their neighbors on the network. Hence, the sensors have a dual role - observing their surroundings and relaying the observations they receive from other sensors. Players wish to collect as much data as possible, but also to reduce energy consumption to save their batteries. So the sensor is active at any time only with a given probability. When activated, sensors both make observations and relay other observations. Each sensor wishes to find a probability of activation that balances between collecting more data and reducing energy consumption. Problems with similar formulations are studied in \cite{sensor_1,sensor_2}.

Let the $N$ sensors be the players, where the probability of player $n$ being active is $p_n$. The action taken by player $n$ is $x_n=1-p_n$, i.e., the probability of being off. Clearly, $x_n$ is in $[0,1]$. We define the reward function for each player $n$ as 
$$u_n(\xx)=f(P_n(\xx))-\alpha+\beta x_n.$$
Here $P_n(\xx)$ is the probability that player $n$'s observation was successfully transmitted to the destination. This probability depends on the actions of all players as player $n$'s action affects the number of observations it makes, and others' actions affect how many packets are relayed to the destination. $f(\cdot)$ is the monotonically increasing `value' assigned to observations, which is typically concave since the marginal value of data decreases as the amount of data received increases. Finally,  $\beta x_n$ is the `reward' for consuming less energy thanks to being off with probability $x_n$. This can be thought of as $\beta-\beta p_n$, which is a shifted cost due to battery usage. Then $-\alpha<0$ is just an offset term that ensures that $u_n(\xx)=0$ if $x_n=0$.

$P_n(\cdot)$ depends on the communication network and can be complex. However, it is easy to justify that this game is a \gamename game with $K=1$. When player $m$ increases their action, it is off with a higher probability, which decreases the times it relays the observations of player $n$ and hence decreases $P_n(\cdot)$. 

Let $L$ be the number of data packets an active sensor transmits between two turns of our game (i.e., decision periods). In general, $L$ can be stochastic and different between sensors, but we assume it is constant for simplicity. At the end of each timestep, sensors get feedback about how many of the $L$ observation packets were successfully received, which provides them with a noisy unbiased estimator of $P_n(\xx(t))$.


Unlike in previous applications, here the action $x_n$ cannot be thought of as `effort' or the amount of resource used. The reward $u_n$ is not necessarily monotone in $x_n$ over $[0,1]$. Clearly $P_n(\xx)$ decreases and $\beta x_n$ increases with increasing $x_n$. Hence we can expect a peak at some point if we keep $\xx_{-n}$ fixed and change $x_n$. The aim of the players in this game is to collect as much data as possible but they are limited by their energy consumption. The QoS requirement ensures that sufficient observations are obtained from all sensors. The ``best'' equilibrium in this example is one where they collect the most data, which is given by the minimal equilibrium.

\section{Tug-of-Peace Algorithms}\label{sec: algo}
In this section, we present algorithms to achieve QoS guarantees for our set of games. 

We first present two algorithms for a single Tug-of-War game. The Tug-of-Peace (ToP) algorithm converges with probability 1 to a point where each player $n$ receives at least reward $\lambda_n$, but it requires $1$-bit communication between players at some timesteps. We later show that this communication is required only a finite number of times. The Fully Distributed Tug-of-Peace (FDToP) algorithm is for the case where strictly no communication is possible. The guarantees suffer in this case, but we give a result on convergence to an action vector that satisfies the QoS requirement with high probability. 

We then present the Meta-ToP algorithm for the setting of Meta-ToW games. We borrow intuition from the single game setting and use the $1$-bit communication for players to switch between game configurations till they find a configuration under which the QoS requirements are feasible. We show that in a finite number of game switches, our algorithm converges to a game configuration and action profile that satisfy the QoS requirement with probability 1. 

\subsection{\commalgoname Algorithm}
We now give an intuitive explanation for the workings of Algorithm \ref{commalgo}. 

Each player first samples a $\tlambda_n$ uniformly at random from the narrow interval $[\lambda_n,\lambda_n+\delta]$. This randomization is just a technical step and is done to ensure the almost sure stability of equilibrium points. Note that this does not affect the QoS requirement, as any point that satisfies $u_n(\xx)=\tlambda_n\geq\lambda_n$ satisfies the QoS requirement for that player. 

Now, let us discuss the iteration performed by each player. For simplicity, first consider a case where there is no noise, so each player observes their exact reward, i.e., $y_n(t)=u_n(\xx(t))$. The idea behind the algorithm is that each player $n$ starts with action $0$ and, upon receiving a reward below their QoS requirement, increases their action. This increase is at a rate proportional to its `dissatisfaction', i.e., how far its reward is from the QoS requirement. By definition of \gamename games, this causes a drop in the rewards received by other players, which encourages them to increase their actions. This `cooperative' increase in actions eventually leads to convergence to an equilibrium point. 

Even if there exists a feasible point in the interior, the noise can cause a player to reach the boundary and get stuck there. To avoid this, when a player tries to exceed the boundary, it sends a signal to other players stating that it might be stuck at the boundary. On receiving such a signal, all players return to the action $0$. This \textit{resets} the iteration and brings it to a point that is in the domain of attraction of an equilibrium where the QoS condition is satisfied. With a lower starting stepsize after the reset, as required by Assumption \ref{assu-step}, the probability of reaching the boundary due to noise is lower.

The signal that indicates a player is stuck at the boundary carries $1$-bit of information. We show that these resets happen only a finite number of times with probability 1. Hence, the communication overhead of Algorithm 1 is negligible. Implementing this communication is application-dependent. For example, communication between devices is more obvious in sensor networks than in power control and task allocation. If agents can only talk with their neighbors on a graph, then there will be a delay of some turns until all players receive these signals, but nothing else changes in our analysis or results.


We need to choose appropriate stepsizes to deal with the noise and the resets, specified in the following standard assumption:
\begin{assumption}\label{assu-step}
    The stepsize sequence $0<\eta(t)<1$ for $t\geq 0$ satisfies the following:
    $$\sum_t \eta(t)=\infty,\sum_t \eta(t)^2<\infty\;\text{and}\;\eta(t+1)<\eta(t)\;\forall t.$$
\end{assumption}

There may exist {\bf multiple equilibria} which satisfy the QoS requirement. They are not all equally desirable since some require more ``pull'' from the players than others (e.g., more energy or power). Our algorithm selects with high probability the ``minimal equilibrium'', in the following sense:  
\begin{definition}\label{def:minimal}
    $\xx_*$ is the \textbf{minimal equilibrium} if among all equilibrium points $\hat{\xx}$ which satisfy $u_n(\hat{\xx})=\tlambda_n$ for all $n$, $\xx_*$ is the smallest component-wise, i.e., $\xx_{*_n}\leq \hat{\xx}_n$ for all $n$.
\end{definition}
The existence of such an equilibrium point is guaranteed by Lemma \ref{lemma:coop} in the next section. 
The next result gives convergence guarantees for the \commalgoname algorithm:
\begin{theorem}\label{main-thm-comm}
Under assumptions \ref{assu-martingale}-\ref{assu-step}, the following statements hold:
\begin{enumerate}[label=(\alph*)] 
    \item With probability 1, the iterates of Algorithm \ref{commalgo} converge to an equilibrium point $\hat{\xx}$ which satisfies $u_n(\hat{\xx})=\tlambda_n\geq\lambda_n$ for all $n$. Moreover, the reset to $\xx(t)=\mathbf{0}_N$ happens only a finite number of times with probability 1.
    \item The iterates of Algorithm \ref{commalgo} converge to $\xx_*$ with probability $1-\varepsilon(\{\eta\})$ where $\varepsilon(\{\eta\})\rightarrow 0$ as $\eta(0)\rightarrow 0$.
    \item The iterates of Algorithm \ref{commalgo} stay within an $\epsilon$-ball of $\xx_*$ after $T=\mathcal{O}(\log(1/\epsilon))$ with probability $1-\varepsilon(\{\eta\})$.
\end{enumerate}
\end{theorem}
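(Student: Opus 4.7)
My plan is to combine the ODE method for stochastic approximation \cite{Borkar-book} with the theory of cooperative (monotone) dynamical systems \cite{coop-review}. Between resets, Algorithm \ref{commalgo} is a projected Robbins--Monro scheme whose mean-field ODE is
\begin{equation*}
\dot{x}_n \;=\; \tlambda_n - u_n(\xx), \qquad n\in\NN,
\end{equation*}
constrained to $\XX$ by $\Pi_{\XX}$. Its Jacobian has off-diagonal entries $-\partial u_n/\partial x_m>0$ on $\intXX$ by Definition \ref{def-game}, so the ODE is cooperative. Assumption \ref{assu-delta}(a) supplies $\yy\in\intXX$ with $u_n(\yy)\geq\lambda_n+\delta\geq\tlambda_n$, a super-equilibrium, while $\mathbf{0}_N$ is a strict sub-equilibrium. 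Monotone comparison then shows that the ODE trajectory from $\mathbf{0}_N$ is coordinate-wise non-decreasing, stays in the compact box $[\mathbf{0}_N,\yy]\subset\XX\setminus\{x_n=B_n\}$, and converges to the minimal equilibrium $\xx_*$ of Definition \ref{def:minimal} (any other equilibrium dominates this limit by order preservation). Assumption \ref{assu-delta}(b) ensures hyperbolicity of every equilibrium with $u=\tlambda$, giving local asymptotic stability.

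For part (a), the main step is showing resets occur only finitely often a.s. Let $\tau_k$ be the time of the $k$th reset; after $\tau_k$ the algorithm restarts at $\mathbf{0}_N$ with stepsize $\eta(\tau_k)$, and the ODE trajectory from $\mathbf{0}_N$ keeps a uniform slack $\rho>0$ from the upper boundary. A Gronwall / maximal-martingale bound on the SA tracking error (standard ODE method in \cite{Borkar-book}) yields $P(\text{reset }k+1\mid\FF_{\tau_k})\leq q(\eta(\tau_k))$ with $q(\eta)\to 0$ as $\eta\to 0$. Since $\tau_k\geq k$ and $\eta$ is non-increasing, $\eta(\tau_k)\to 0$, so iterating this conditional bound gives $P(\text{at least }k\text{ resets})\to 0$ and resets are a.s.\ finite. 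On the complementary event, after the final reset the iterates track the ODE on every finite window; by general SA results for cooperative systems with isolated hyperbolic equilibria, they converge a.s.\ to one of these equilibria $\hat{\xx}$ with $u_n(\hat{\xx})=\tlambda_n$.

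For part (b), the ODE from $\mathbf{0}_N$ converges specifically to $\xx_*$, and a tubular neighborhood of this trajectory lies entirely in the basin of attraction of $\xx_*$. Applying the same tracking-error bound with $\eta(0)$ in place of $\eta(\tau_k)$, the probability that the iterates ever leave this neighborhood is some $\varepsilon(\{\eta\})$ with $\varepsilon(\{\eta\})\to 0$ as $\eta(0)\to 0$; on the complement, the iterates stay in the basin of $\xx_*$ and converge to it. The principal obstacle throughout is the coupling between the sample-path-dependent reset times $\tau_k$ and the SA tracking bound: the Borel--Cantelli-type estimate has to be phrased using the post-reset filtration $\FF_{\tau_k}$, and one must verify that the implicit constants in the tracking estimate do not blow up with $k$, so that $q(\eta(\tau_k))\to 0$ genuinely forces summability of the reset probabilities.
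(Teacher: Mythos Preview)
Your proposal is correct and takes essentially the same approach as the paper: both identify the cooperative mean-field ODE, establish the minimal equilibrium $\xx_*$ as a LASE (the paper does this via Sard's lemma plus a result from \cite{Borkar-prospect} on hyperbolic minimal equilibria of monotone systems), invoke a post-reset concentration bound (the paper's Lemma~\ref{lemma:high-prob}, based on \cite{Borkar-conc}) to control the probability of ever leaving a neighborhood of $\xx_*$, and then argue that resets occur only finitely often. The one minor variation is in that last step: the paper uses summability $\sum_t \phi(t)<\infty$ together with the conditional Borel--Cantelli lemma \cite[Corollary~5.29]{Breiman} to derive a contradiction with infinitely many resets, whereas you iterate the conditional bound into a product estimate $P(\text{at least }k\text{ resets})\leq \prod_{j<k} q(\eta(j))\to 0$; both routes are valid and rest on the same underlying concentration estimate.
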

Here the notation $\varepsilon(\{\eta\})$ denotes the dependence of the probability on the stepsize sequence $\{\eta(t)\}$. The dependence on general $\eta(0)$ has been omitted here for simplicity. But as an example, consider the stepsize sequence $\eta(t)=1/(t+t_0)^\mu$ for sufficiently large $t_0>0$ and $0.5<\mu\leq 1$, then $\varepsilon(\{\eta\})=\mathcal{O}\left(t_0^{1-\mu/2}\exp\left(-Ct_0^{\mu/2}\right)\right)$ for some constant $C>0$. 

\begin{algorithm}
\caption{\label{commalgo} \commalgoname Algorithm}
\textbf{Initialization: }Let $x_n(0)=0, \ \forall n\in\NN$ and $\eta(t)$ be the stepsize sequence. Let $\tlambda_n\sim\text{Unif}[\lambda_n,\lambda_n+\delta]$ for some $\delta>0$.

\textbf{At timesteps $t=0,1,\ldots$, each player $n\in\NN$}
\begin{enumerate}[label=(\arabic*)]
    \item Plays action $x_n(t)$ and observes a noisy reward $y_n(t)$.
    \item Updates their action as follows:
        \begin{equation}\label{iter-comm}
            x_n(t+1)=\Pi_{\XX_n}(x_n(t)+\eta(t)(\tlambda_n-y_n(t))),
        \end{equation}
        where $\Pi_{\XX_n}$ denotes the Euclidean projection into $[0,B_n]$.
    \item Transmits signal $s=1$ if $x_{n}(t+1)= B_n$, otherwise it does nothing (i.e., $s=0$). 
    \item Resets action to $0$, i.e., $x_n(t+1)=0$ upon receiving $s=1$.
\end{enumerate}
\textbf{End}
\end{algorithm}

\subsection{\nocommalgoname Algorithm}
If even the $1$-bit communication is not possible between the players, then a player cannot signal that it might be stuck at the boundary. If just that player resets its action to zero, then the resulting action vector might still be outside the domain of attraction of a desirable equilibrium. Hence, a reset mechanism is no longer an option with no communication. For scenarios like power control, where digital communication is not yet established between the devices, 1-bit signaling requires a special design that can become an implementation burden. 

Instead, in the fully distributed version of ToP, a player that is stuck at the boundary just projects their action back to the boundary, hoping that other players might help it reach the QoS later. This modification is detailed in Algorithm \ref{nocommalgo}. 

\begin{algorithm}
\caption{\label{nocommalgo} \nocommalgoname Algorithm}
\textbf{Initialization: }Let $x_n(0)=0, \ \forall n\in\NN$ and $\eta(t)$ be the stepsize sequence. Let $\tlambda_n\sim\text{Unif}[\lambda_n,\lambda_n+\delta]$ for some $\delta>0$.

\textbf{At timesteps $t=0,1,\ldots$, each player $n\in\NN$}
\begin{enumerate}[label=(\arabic*)]
    \item Plays action $x_n(t)$ and observes a noisy reward $y_n(t)$.
    \item Updates their action as follows:
        \begin{equation}\label{iter-nocomm}
            x_n(t+1)=\Pi_{\XX_n}(x_n(t)+\eta(t)(\tlambda_n-y_n(t))),
        \end{equation}
        where $\Pi_{\XX_n}$ denotes the Euclidean projection into $[0,B_n]$.

\end{enumerate}
\textbf{End}
\end{algorithm}

The following result gives guarantees for the \nocommalgoname algorithm.

\begin{theorem}\label{main-thm-nocomm}
     Under assumptions \ref{assu-martingale}, \ref{assu-delta}, and \ref{assu-step}, the iterates of Algorithm \ref{nocommalgo} converge with probability 1 to a point. The iterates of Algorithm \ref{nocommalgo} converge to $\xx_{*}$, as defined in Definition \ref{def:minimal}, with probability $1-\varepsilon(\{\eta\})$ where $\varepsilon(\{\eta\})\rightarrow 0$ as $\eta(0)\rightarrow 0$. Moreover, the iterates of Algorithm \ref{nocommalgo} stay within an $\epsilon$-ball of $\xx_*$ after $T=\mathcal{O}(\log(1/\epsilon))$ with probability $1-\varepsilon(\{\eta\})$.
\end{theorem}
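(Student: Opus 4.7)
The plan is to apply Borkar's ODE method for projected stochastic approximation \cite{Borkar-book} to the iteration \eqref{iter-nocomm}. Since the iterates live in the compact box $\XX$ and $\MM(t)$ is a bounded martingale difference by Assumption \ref{assu-martingale}, the iterates asymptotically track the constrained mean-field ODE $\dot{\xx}=\btlambda-u(\xx)$ reflected at $\partial\XX$. The key structural fact, inherited directly from Definition \ref{def-game}, is that the Jacobian of $\btlambda-u(\xx)$ has strictly positive off-diagonal entries on $\intXX$, so the limiting dynamics is cooperative in the sense of \cite{coop-review}.

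To obtain the almost-sure convergence to a point, I would first invoke Hirsch's theorem for cooperative monotone flows: almost every bounded trajectory of the limiting ODE converges to an equilibrium. Assumption \ref{assu-delta}(b) then forces every equilibrium to be hyperbolic and isolated (no purely imaginary Jacobian eigenvalues), and the standard stochastic-approximation avoidance-of-unstable-equilibria argument \cite{Borkar-book} yields a.s.\ convergence of $\xx(t)$ to a (possibly boundary) equilibrium of the projected ODE. Note that, in contrast with Theorem \ref{main-thm-comm}(a), without the reset mechanism this limit is not guaranteed to satisfy $u_n(\xx)\geq\lambda_n$ — a player can remain pinned at $x_n=B_n$ — and the theorem accordingly only claims convergence to \emph{some} point.

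For the high-probability convergence to the minimal equilibrium $\xx_*$, I would exploit the initial condition $\xx(0)=\mathbf{0}_N$. In the noiseless ODE, $\dot{\xx}(0)=\btlambda>\mathbf{0}_N$ (since $u(\mathbf{0}_N)=\mathbf{0}_N$ by Definition \ref{def-game}) and, by the comparison principle for cooperative systems, the trajectory remains component-wise non-decreasing, so it cannot overshoot $\xx_*$ (whose existence comes from Lemma \ref{lemma:coop}) and must converge to it. Hence the nominal trajectory lies in, and converges inside, the domain of attraction of $\xx_*$; by continuous dependence of the flow on perturbations, a tubular neighbourhood of this trajectory is attracted to $\xx_*$ as well.

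The main obstacle is quantifying the probability that the noisy iterate ever leaves this tube. I would bound the martingale partial sums $\sup_t\bigl\|\sum_{s\leq t}\eta(s)\MM(s)\bigr\|$ via Azuma/Freedman-type inequalities, using that $|M_n(t)|\leq\widehat{M}$: the tail is controlled by $\sum_s\eta(s)^2$, which shrinks as $\eta(0)\to 0$ under Assumption \ref{assu-step}. Choosing the tube width smaller than the distance from the nominal trajectory to the basin boundary of $\xx_*$ then gives convergence to $\xx_*$ with probability $1-\varepsilon(\{\eta\})$, and careful accounting of the decay of $\eta(t)=1/(t+t_0)^\mu$ — splitting the run into an initial transient and a tail where the remaining stepsize mass decays polynomially — recovers the stated $\mathcal{O}(t_0^{1-\mu/2}\exp(-Ct_0^{\mu/2}))$ rate.
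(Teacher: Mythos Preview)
Your proposal follows the same two-step architecture as the paper: (i) almost-sure convergence to an equilibrium of the projected ODE via the ODE method and the cooperative structure, and (ii) high-probability convergence to $\xx_*$ by exploiting the initial condition $\xx(0)=\mathbf{0}_N$ and the monotonicity of the nominal trajectory. The paper packages (i) as Lemma \ref{lemma:conv-boundary} (citing Kushner--Yin's projected-SA framework rather than Borkar's, and without any avoidance-of-unstable-equilibria step, which is unnecessary since the claim is only convergence to \emph{some} point); for (ii) the paper invokes the Thoppe--Borkar concentration bound via Alekseev's formula (Lemma \ref{lemma:high-prob}), which directly gives a summable tail $\phi(t')$ and the stated rate, in place of your more elementary Azuma/Freedman-plus-tube argument. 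Both routes are sound; the Alekseev-based bound is sharper because it tracks how the martingale error propagates through the dynamics rather than bounding it uniformly, but your approach would also deliver the qualitative $\varepsilon(\{\eta\})\to 0$ conclusion.
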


Algorithm \ref{nocommalgo} also converges with probability 1. Unlike Algorithm \ref{commalgo}, the fully distributed variant may converge to a {\em bad equilibrium}, where one or more players are stuck at the boundary, i.e., $\exists n, s.t., \hat{x}_n=B_n$. However, the second part of Theorem \ref{main-thm-nocomm} states that with high probability (depending on the stepsize), not only does the algorithm avoid such a bad equilibrium, it actually converges to the best one possible. This ``best'' equilibrium satisfies the QoS condition, i.e., $u_n(\hat{\xx})=\tlambda_n$ for all players $n$ \textbf{and} is  ``minimal''.

\subsection{Meta Tug-of-Peace Algorithm}
We now present the Meta Tug-of-Peace (Meta-ToP) algorithm, a distributed algorithm that ensures convergence to an action profile where all players achieve their QoS requirements in Meta-ToW games. Similar to the ToP algorithm, this algorithm has the two key concepts of `cooperative' increase in actions and boundary signals. But here the boundary signals play two roles: avoiding getting stuck at the boundary, similar to ToP, \textit{and} switching game configurations. If a boundary signal is broadcast, it is a possible indicator that the QoS guarantees are not feasible under the current game configuration, and hence the players change the configuration. It is always possible that the configuration was feasible, and a player reached the boundary due to noise. But our algorithm ensures that each configuration will be tried infinitely often until the QoS requirements are met.

Meta-ToP works as follows: suppose player $n$ reaches the boundary $B_n$ at time $t$. It broadcasts $s=1$ to all players in the same game, i.e., all players $m$ with $g_m(t)=g_n(t)$ and signals $r=1$ to all players in all games. Each player who received signal $s=1$ switches a game with probability $\rho>0$. If a player $m$ decides to switch a game, they choose a game uniformly at random from $\GG\setminus\{g_m(t)\}$. Players which only receive signal $r=1$ perform the same procedure but with probability $\varphi>0$ instead of $\rho$. We call this procedure the \textit{game switch} mechanism. Each player $m$, irrespective of whether they switched their game or not, resets their action to $x_{m}(t+1)=0$ when they receive signal $r=1$.

\begin{algorithm}
\caption{\label{meta-algo} Meta Tug-of-Peace Algorithm}
\textbf{Initialization: }Let $x_n(0)=0,$ and $g_n(0)$ be chosen uniformly from $\GG$, $\forall n\in\NN$. Let
 $\eta(t)$ be the stepsize sequence, $\rho,\varphi$ be the switching probabilities, and $\tlambda_n\sim\text{Unif}[\lambda_n,\lambda_n+\delta]$ for some $\delta>0$.

\textbf{At timesteps $t=0,1,\ldots$, each player $n\in\NN$}
\begin{enumerate}[label=(\arabic*)]
    \item Plays action $x_n(t)$ in game $g_n(t)$ and observes a noisy reward $y_n(t)$.
    \item Updates their action as follows:
        \begin{equation*}
            x_n(t+1)=\Pi_{\XX_n}(x_n(t)+\eta(t)(\tlambda_n-y_n(t))),
        \end{equation*}
        where $\Pi_{\XX_n}$ denotes the Euclidean projection into $[0,B_n]$.
    \item If $x_n(t+1)=B_n$, then it transmits signal $s=1$ to all players in the same game, and $r=1$ to all players. Otherwise, it does nothing (i.e., $s=r=0$). 
    \item Resets action to $0$, i.e., $x_n(t+1)=0$ upon receiving $r=1$ from some player $m$.
    \item If signal $s=1$ is received, then $g_n(t+1)=g$ with probability $\rho/(K-1)$ for all $g\neq g_n(t)$ and $g_n(t+1)=g_n(t)$ with probability $1-\rho$.
    \item Else if only signal $r=1$ is received, then $g_n(t+1)=g$ with probability $\varphi/(K-1)$ for all $g\neq g_n(t)$ and $g_n(t+1)=g_n(t)$ with probability $1-\varphi$.
\end{enumerate}
\textbf{End}
\end{algorithm}

If a player in game $g$ reaches the boundary, then we would like more players to leave the game, hoping that the remaining players in the game will be able to achieve their QoS requirements. Hence, we want $\rho>\varphi$ for our algorithm. We require all players to have a positive probability of switching for technical reasons. Our analysis requires that if the game switch mechanism is performed infinitely many times, then all game configurations are chosen infinitely often with probability 1, irrespective of which player triggered the switch. An alternative simple mechanism that satisfies this requirement is trying all $K^N$ configurations sequentially, irrespective of which player triggered the switch. Similarly, another possible mechanism is where all players randomly decide their next game when they receive a boundary signal.

Our following result gives the convergence guarantees for the Meta-ToP algorithm:
\begin{theorem}\label{thm:MC}
    Under assumptions \ref{assu-martingale}, \ref{assu-delta}, and \ref{assu-step}, and for any $0<\rho,\varphi<1$, the action profiles given by Algorithm \ref{meta-algo}, $(\gbold(t),\xx(t))$, converge to an equilibrium point $(\hat{\gbold},\hat{\xx})$ which satisfies $u_n(\hat{\gbold},\hat{\xx})=\tlambda_n\geq \lambda_n$ for all $n$ with probability 1. Moreover, the game switches happen only a finite number of times with probability 1. 
\end{theorem}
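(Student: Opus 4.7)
The plan is to reduce the meta-game analysis to a sequence of single-game analyses (Theorem \ref{main-thm-comm}) interleaved with the random game-switching mechanism, and then show that the switches stop happening after finite time by a lock-in argument driven by the decaying stepsize. By Definition \ref{defn:meta-ToW}, within any fixed game configuration $\gbold$ the reward functions decouple across games: each subgroup of players sharing a common game $g$ forms an independent ToW game in the sense of Definition \ref{def-game}, because $\partial u_n(\gbold,\xx)/\partial x_m = 0$ whenever $g_m \neq g_n$. Between consecutive switches the iteration in step (2) of Algorithm \ref{meta-algo} is exactly the projected stochastic approximation of Algorithm \ref{commalgo}, run in parallel on each subgame, with the broadcast signal $r=1$ acting as a global reset that is equivalent to the per-game reset used in ToP.

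Next I would handle the switching dynamics. Let $\tau_1 < \tau_2 < \cdots$ denote the (random) times at which any player broadcasts $s=1$, and let $\gbold(\tau_k^+)$ be the configuration right after the $k$-th switch. Conditional on $\FF_{\tau_k}$, each player independently resamples its game with probability $\rho$ or $\varphi$ (both positive) to any of the $K-1$ alternatives, so every configuration in $\GG^N$ has a uniformly lower-bounded probability $\beta > 0$ of being the successor, regardless of past history. Hence on the event $\{\tau_k < \infty \text{ for all } k\}$, the sequence $\{\gbold(\tau_k^+)\}$ visits every configuration infinitely often almost surely, including (by Assumption \ref{assu-delta}(a)) a feasible configuration $\gbold^\ast$ for which a minimal equilibrium $\xx_\ast^{\gbold^\ast} \in \intXX$ exists with $u_n(\gbold^\ast,\xx_\ast^{\gbold^\ast}) = \tlambda_n$.

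Then I would apply a lock-in argument at the feasible visits. After the reset in step (4), the state at time $\tau_k$ is $(\gbold^\ast,\mathbf{0}_N)$, and the subsequent iterates follow the ODE $\dot{\xx} = \btlambda - u(\gbold^\ast,\xx)$ with martingale noise driven by $\MM(t)$. The deterministic ODE trajectory from $\mathbf{0}_N$ is attracted to $\xx_\ast^{\gbold^\ast}$ (whose stability is guaranteed a.s.\ by the perturbation argument of Assumption \ref{assu-delta}(b), as used in Theorem \ref{main-thm-comm}) and remains bounded away from the upper boundary by some margin $\delta'>0$. Using Azuma--Hoeffding on $\sum_{s\geq \tau_k}\eta(s)\MM(s)$ together with the standard tail bound on the noise-driven deviation from the ODE (see the stepsize-dependent bound stated after Theorem \ref{main-thm-comm}), the probability $p_k$ that the iterates ever exceed $B_n - \delta'/2$ for some $n$ (and thus trigger another switch) satisfies $p_k \to 0$ as $\eta(\tau_k) \to 0$, and is summable under Assumption \ref{assu-step}. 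By Borel--Cantelli applied along the infinite subsequence of feasible visits, only finitely many such visits are followed by another boundary hit, contradicting the assumption that $\tau_k < \infty$ for all $k$. Therefore the switches terminate almost surely; from that point onward the dynamics coincide with Algorithm \ref{commalgo} in a fixed (necessarily feasible) configuration, and Theorem \ref{main-thm-comm}(a) delivers convergence to an equilibrium satisfying $u_n(\hat{\gbold},\hat{\xx}) = \tlambda_n \geq \lambda_n$.

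The main obstacle will be the quantitative lock-in estimate: I need $p_k$ to be summable along visits to $\gbold^\ast$, which requires controlling both the deviation from the ODE trajectory (a standard exponential bound under Assumption \ref{assu-step}) and the geometry of the domain of attraction, in particular ensuring that starting from $\mathbf{0}_N$ the ODE stays uniformly away from the boundary. Handling the projection at the lower boundary $x_n = 0$ (which does \emph{not} trigger a switch) requires showing that at any equilibrium of the projected ODE with $x_n=0$ and $u_n < \tlambda_n$ the state would immediately leave the boundary, so these lower-boundary projections do not obstruct convergence. A secondary subtlety is the application of Borel--Cantelli conditional on $\{\tau_k<\infty \,\forall k\}$: since the successor distribution has a uniform lower bound $\beta$, the feasible-visit times form a sufficiently dense (geometrically distributed) subsequence for the summability of $p_k$ along it to follow from that along the full sequence.
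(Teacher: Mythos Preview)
Your proposal is correct and takes essentially the same approach as the paper: both argue by contradiction that infinitely many switches would force infinitely many visits to a feasible configuration $\gbold^\ast$, apply a summable lock-in bound at each such visit from $\mathbf{0}_N$ (the paper packages this as Lemma \ref{lemma:high-prob} via the concentration result of \cite{Borkar-conc}, which is precisely the Azuma--Hoeffding/ODE-deviation estimate you outline), and then invoke an extended Borel--Cantelli lemma to obtain the contradiction. The obstacles you flag are exactly the ones the paper handles---monotonicity of the cooperative ODE keeps the trajectory from $\mathbf{0}_N$ in $[\mathbf{0}_N,\xx_*^{\gbold^\ast}]\subset\intXX$, the inward-pointing field at $x_n=0$ disposes of the lower boundary, and the conditional Borel--Cantelli is the extension in \cite[Corollary 5.29]{Breiman}.
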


Technically, any probabilities $0<\rho,\varphi<1$ guarantee that all game configurations are tried `sufficiently often'. Thus, the asymptotic convergence guaranteed by Theorem \ref{thm:MC} holds for any $0<\rho,\varphi<1$. However, $\rho,\varphi$ still affect the convergence rate. Conveniently, our simulations in Section \ref{sec:sim} show that convergence time is only mildly sensitive to these parameters.

\section{Convergence Analysis}\label{sec:analysis}
In this section, we present proofs for our convergence guarantees on the ToP algorithms. 
Our convergence analysis relies on stochastic approximation and the ODE method \cite{Borkar-book}. 

\subsection{Analysis of the ToP Algorithm for a Single ToW Game}
Consider the ODE:
\begin{equation}\label{ode}
\dot{\xx}(t)=h(\xx(t)),
\end{equation}
where $h_n(\xx(t))=\tlambda_n-u_n(\xx(t))$. Intuitively, ignoring resets and projections in iterations \eqref{iter-comm} and \eqref{iter-nocomm}, the ToP algorithms employ a noisy discretization of this ODE. We formally show how this ODE relates to our iterations later.  

By definition of a ToW game, this ODE satisfies  $\frac{\partial h_n}{\partial x_m}>0$ for all $n\neq m \in\NN$. Such an ODE\ is called a \textit{cooperative ODE} \cite{coop-1, coop-systems}. We have already assumed that there exists an equilibrium point for this ODE in the region $\XX^{o}$ (Assumption \ref{assu-delta}(a)). Then this class of ODEs has certain desirable convergence properties, which we restate in the following lemma:
\begin{lemma}\label{lemma:coop}
    The ODE in \eqref{ode} has the following properties:
    \begin{enumerate}[label=(\alph*)]
        \item \cite[Theorem 2.1]{coop-mini} For initial conditions in an open dense set, the solutions of \eqref{ode} converge to an equilibrium.
        \item \cite[Theorem 5.6]{coop-review} There exists a minimal equilibrium $\xx_*$ of \eqref{ode} such that any other equilibria $\hat{\xx}$ satisfies $x_{*_n}\leq \hat{x}_n$ for all $n\in\NN$.
        \item \cite{coop-systems} The dynamical system described by \eqref{ode} is monotone, i.e., if there are two solutions $\xx(\cdot)$ and $\xx'(\cdot)$ of \eqref{ode} with $\xx(0)\geq \xx'(0)$, then $\xx(t)\geq \xx'(t)$ for all $t\geq0$.
    \end{enumerate}
\end{lemma}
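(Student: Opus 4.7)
The plan is to verify that the cooperativity (Kamke) hypothesis of each cited result holds for $h$, and then invoke the theorems. The essential computation is
\begin{equation*}
\frac{\partial h_n}{\partial x_m}(\xx) \;=\; -\frac{\partial u_n}{\partial x_m}(\xx) \;>\; 0, \qquad m\neq n, \ \xx\in\intXX,
\end{equation*}
which follows immediately from Definition \ref{def-game}. Since $u$ is $C^1$ on $\XX$, so is $h$, and thus $h$ is a cooperative vector field in the sense of \cite{coop-systems}. I would also record the observation that $h_n(\mathbf{0}_N) = \tlambda_n > 0$, using the ToW property $u_n(\xx)=0$ whenever $x_n=0$; hence $\mathbf{0}_N$ is a strict sub-equilibrium.

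For part (c), I would invoke Kamke's comparison theorem for $C^1$ cooperative vector fields, which asserts that the flow is order-preserving. The standard proof proceeds by considering $\zz(t)=\xx(t)-\xx'(t)$; if $\zz(0)\geq 0$ but some component $\zz_n$ first reaches zero at time $t_0$ with the remaining components non-negative, cooperativity gives $\dot\zz_n(t_0) = h_n(\xx(t_0))-h_n(\xx'(t_0)) \geq 0$ by the mean-value representation along the segment, so $\zz_n$ cannot cross into negative values. A standard perturbation argument ($\xx_\varepsilon' := \xx'-\varepsilon\mathbf{1}$ and $\varepsilon\downarrow 0$) removes the generic tie-breaking caveat and yields the claim for all $t\geq 0$ for which both solutions exist.

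For part (b), I would give a direct constructive argument using part (c). Let $\xx(\cdot;\mathbf{0}_N)$ denote the solution of \eqref{ode} starting from $\mathbf{0}_N$. Because $h(\mathbf{0}_N) = \btlambda > 0$, monotonicity applied to $\xx(0) = \mathbf{0}_N$ and $\xx'(s) := \xx(s+\varepsilon;\mathbf{0}_N)$ for small $\varepsilon > 0$ shows $t\mapsto \xx(t;\mathbf{0}_N)$ is componentwise non-decreasing. By Assumption \ref{assu-delta}(a) there is an equilibrium $\xx_0\in\intXX$, and part (c) applied to the pair $\xx_0 \geq \mathbf{0}_N$ yields $\xx(t;\mathbf{0}_N)\leq \xx_0$ for all $t$. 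A bounded monotone trajectory converges, and its limit $\xx_*$ is an equilibrium by continuity of $h$. Minimality follows by another use of part (c): for any equilibrium $\hat\xx\in\XX$ we have $\hat\xx \geq \mathbf{0}_N$, hence $\hat\xx = \xx(t;\hat\xx)\geq \xx(t;\mathbf{0}_N)\to \xx_*$, so $\xx_*\leq\hat\xx$ componentwise. With monotonicity and compactness of $\XX$ in hand, part (a) follows by direct appeal to Hirsch's generic convergence theorem for monotone flows, stated in the form of \cite[Theorem 2.1]{coop-mini}: bounded orbits of strongly monotone (or sufficiently irreducible) cooperative systems converge to equilibrium for initial conditions in an open, dense set.

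The main obstacle I anticipate is the boundary behavior of the ODE. The strict cooperativity holds only on $\intXX$, and \eqref{ode} is written without projection, so trajectories could in principle exit $\XX$; the argument above handles this by restricting attention to the order interval $[\mathbf{0}_N,\xx_0]$, which the monotonicity argument shows is forward-invariant, but I would want to check that all points of interest (in particular every equilibrium) lie in some such forward-invariant order interval. A secondary technical point is that the precise hypothesis of \cite[Theorem 2.1]{coop-mini} typically demands strong monotonicity or an irreducibility condition on the Jacobian $Dh$; I would verify it by arguing that the strict negativity in Definition \ref{def-game} makes every off-diagonal entry of $Dh$ strictly positive on $\intXX$, giving an irreducible cooperative system and hence the strong order-preserving property needed for Hirsch's theorem.
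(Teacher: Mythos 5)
The paper offers no proof of this lemma at all: it is stated as a restatement of known results, with part (a) attributed to Hirsch--Smith's mini-review, part (b) to Hirsch--Smith's monotone-maps review, and part (c) to Hirsch's work on cooperative systems; the only verification the paper performs is the one-line observation, made in the text immediately preceding the lemma, that $\partial h_n/\partial x_m = -\partial u_n/\partial x_m > 0$ for $m\neq n$, so that \eqref{ode} is a cooperative ODE. Your proposal does strictly more: it carries out that same hypothesis check, but then supplies self-contained arguments for parts (b) and (c) (Kamke's comparison argument for monotonicity; the monotone bounded trajectory from $\mathbf{0}_N$ for existence and minimality of $\xx_*$) and explicitly flags the irreducibility condition needed for Hirsch's generic convergence theorem in part (a), which the paper leaves implicit. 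This is a legitimate and arguably more careful route; what citation buys the paper is brevity, while your construction buys an explicit identification of $\xx_*$ as $\lim_t \xx(t;\mathbf{0}_N)$, which is exactly the fact the paper later relies on (e.g., in Lemma \ref{lemma:stable} and Lemma \ref{lemma:high-prob}, where $\mathbf{0}_N$ must lie in the domain of attraction of $\xx_*$). One small correction: Assumption \ref{assu-delta}(a) does not directly provide an equilibrium $\xx_0\in\intXX$; it provides a point where $u_n \geq \lambda_n+\delta \geq \tlambda_n$, i.e., a strict super-equilibrium ($h\leq \mathbf{0}_N$ there). Your own monotone-trajectory argument is what then produces an equilibrium in the order interval between $\mathbf{0}_N$ and that point, so the construction goes through, but the sentence as written assumes what it is in the process of proving.
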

The first statement of the lemma implies that solutions of the ODE \eqref{ode}  converge to an equilibrium point that satisfies the QoS requirement. The last two statements together imply that any solution of the ODE initiated in the region $[\mathbf{0}_N,\xx_*]$ stays in $[\mathbf{0}_N,\xx_*]$ for all $t$. The following lemma shows that the equilibrium $\xx_*$ is a stable equilibrium with probability 1:
\begin{lemma}\label{lemma:stable}
    The equilibrium point $\xx_*$ is a stable equilibrium point for the ODE \eqref{ode} with probability 1. Moreover, it is a Locally Asymptotically Stable Equilibrium (LASE) \cite[B.3]{Borkar-book}.
\end{lemma}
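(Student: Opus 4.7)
My plan is to establish $\xx_*$ as LASE by combining a Perron--Frobenius-type analysis of the Jacobian at $\xx_*$ with the monotone flow structure from Lemma \ref{lemma:coop}. The target is to show that $J := Dh(\xx_*) = -Du(\xx_*)$ has all eigenvalues with strictly negative real parts. First, I observe that $J$ is a Metzler matrix, since by the ToW property $\partial h_n/\partial x_m = -\partial u_n/\partial x_m > 0$ for $m \neq n$. By the Perron--Frobenius theorem for Metzler matrices, the spectral abscissa $\mu := \max\{\operatorname{Re}(\lambda) : \lambda \in \sigma(J)\}$ is itself a real eigenvalue of $J$, with an associated non-negative right eigenvector $\boldsymbol{v} \geq \mathbf{0}_N$, $\boldsymbol{v} \neq \mathbf{0}_N$. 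Assumption \ref{assu-delta}(b) rules out purely imaginary eigenvalues of $J$ with probability 1, so $\xx_*$ is hyperbolic w.p.\ 1 and in particular $\mu \neq 0$. It therefore suffices to prove $\mu < 0$.

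Next, I will analyze the trajectory $\xx(t)$ of \eqref{ode} started at $\xx(0) = \mathbf{0}_N$. Since $u_n(\mathbf{0}_N) = 0$ for every $n$, the drift satisfies $h(\mathbf{0}_N) = \btlambda > \mathbf{0}_N$, so $\dot{\xx}(0) > \mathbf{0}_N$. A standard consequence of Lemma \ref{lemma:coop}(c), applied to the initial conditions $\xx(0)$ and $\xx(\epsilon)$ for small $\epsilon > 0$, is that $\xx(t)$ is non-decreasing in $t$. Using Lemma \ref{lemma:coop}(c) once more with the constant trajectory at $\xx_*$ yields $\xx(t) \leq \xx_*$ for all $t$. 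Thus $\xx(t)$ is non-decreasing and bounded above, hence converges to a zero of $h$ (i.e., an equilibrium) which, by minimality of $\xx_*$, must equal $\xx_*$ itself.

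I then rule out $\mu > 0$ by contradiction. Since $\xx_* \in \intXX$, I can choose $\varepsilon > 0$ small enough that $\yy_0 := \xx_* - \varepsilon \boldsymbol{v}$ satisfies $\mathbf{0}_N \leq \yy_0 \leq \xx_*$. Let $\yy(t)$ be the solution of \eqref{ode} with $\yy(0) = \yy_0$. By Lemma \ref{lemma:coop}(c), $\xx(t) \leq \yy(t) \leq \xx_*$ for all $t$, and combined with $\xx(t) \to \xx_*$ this forces $\yy(t) \to \xx_*$. However, the initial displacement $\yy_0 - \xx_* = -\varepsilon \boldsymbol{v}$ lies entirely along an unstable eigenvector of $J$, so by the Hartman--Grobman theorem $\|\yy(t) - \xx_*\|$ grows exponentially as long as $\yy(t)$ remains near $\xx_*$, so $\yy(t)$ must leave any sufficiently small ball about $\xx_*$ in finite time. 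This contradicts $\yy(t) \to \xx_*$. Hence $\mu < 0$, all eigenvalues of $J$ have strictly negative real parts, and $\xx_*$ is LASE.

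The main obstacle I anticipate is the Hartman--Grobman step when $\boldsymbol{v}$ is only non-negative rather than strictly positive (irreducibility of $J$ is not assumed anywhere). The argument still goes through because the entire initial displacement $\yy_0 - \xx_*$ equals $-\varepsilon \boldsymbol{v}$, which by construction has a non-trivial component in the unstable eigenspace of $J$, so the local invariant-manifold machinery delivers the required exponential departure from $\xx_*$ and the sandwich argument produces the contradiction.
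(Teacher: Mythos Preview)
Your argument is essentially correct and takes a genuinely different, more self-contained route than the paper's. The paper first establishes hyperbolicity of $\xx_*$ (using Sard's Lemma to rule out a singular Jacobian a.s., and Assumption~\ref{assu-delta}(b) to rule out nonzero imaginary eigenvalues) and then simply \emph{cites} an external result, Theorem~4.1.1 of \cite{Borkar-prospect}, which asserts that a hyperbolic minimal equilibrium of a cooperative system is stable. You instead re-derive that fact: $J=Dh(\xx_*)$ is Metzler, so by Perron--Frobenius its spectral abscissa $\mu$ is a real eigenvalue with a non-negative eigenvector $\boldsymbol{v}$; you then rule out $\mu>0$ via the monotone sandwich. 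Your approach is more transparent and avoids a black-box citation; the paper's is shorter.

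Two points deserve tightening. First, you infer $\mu\neq 0$ (indeed hyperbolicity) from Assumption~\ref{assu-delta}(b) alone. The paper reads that assumption as excluding only \emph{nonzero} purely imaginary eigenvalues and separately invokes Sard's Lemma to show $Du(\xx_*)$ is nonsingular a.s.; you should do the same. Second, the Hartman--Grobman step as written is not quite a contradiction: that $\yy(t)$ ``must leave any sufficiently small ball about $\xx_*$'' does not by itself preclude $\yy(t)\to\xx_*$, since the trajectory may leave and return. The cleanest repair uses what your sandwich already delivers: every point of the box $[\mathbf{0}_N,\xx_*]$ lies in the basin of $\xx_*$; if $\mu>0$, that basin equals the global stable manifold of a hyperbolic equilibrium, an immersed submanifold of dimension at most $N-1$ with empty interior --- a contradiction. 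Alternatively, note that the strict ToW inequalities make the off-diagonal entries of $J$ strictly positive, so Perron--Frobenius actually gives $\boldsymbol{v}>\mathbf{0}_N$; then $h(\xx_*-\varepsilon\boldsymbol{v})=-\varepsilon\mu\boldsymbol{v}+O(\varepsilon^2)<\mathbf{0}_N$ for small $\varepsilon$, so by cooperativity $\yy(t)$ is non-increasing, contradicting $\yy(t)\to\xx_*>\yy(0)$. This second fix also dissolves your concern about irreducibility in the final paragraph.
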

\begin{proof}
Let $u(\cdot)$ be the vector function $(u_1(\cdot),\ldots,u_N(\cdot))^T$. Then $Dh(\xx)=-Du(\xx)$ for all $\xx$, where $Dh(\xx)$ and $Du(\xx)$ are the Jacobians of functions $h$ and $u$, respectively, evaluated at $\xx$. Sard's Lemma \cite{sard} states that the image of the set of points for which the Jacobian $Du(\cdot)$ is singular has Lebesgue measure zero, i.e., the set 
$$\ZZ=\{\mathbf z\in\RR^N:\exists \xx\in\RR^N\;\text{s.t.}\;u(\xx)=\mathbf z, \;\det(Du(\xx))=0\}$$
has Lebesgue measure zero. Since each coordinate of $\btlambda$ is chosen uniformly from the set $[\lambda_n,\lambda_n+\delta]$, $\btlambda\notin \ZZ$ a.s. This implies that $Dh(\xx_*)$ is nonsingular a.s.\ which implies that the equilibrium point $\xx_*$ is a.s.\ isolated. Using Assumption \ref{assu-delta}(b), the Jacobian $Dh(\xx_*)$ does not have purely imaginary eigenvalues a.s.\ and hence the equilibrium $\xx_*$ is hyperbolic a.s. Theorem 4.1.1 from \cite{Borkar-prospect} states that if the minimal equilibrium $\xx_*$ is hyperbolic, then it is a stable equilibrium. 

Now, since $\xx_*$ is isolated, there exists an open neighborhood $U\in\XX^o$ of $\xx_*$ such that $\xx_*$ is the only equilibrium in $U$. Then, because $\xx_*$ is stable, there exists a neighborhood $V\subseteq U$ such that every solution of \eqref{ode} initiated in $V$ remains in $U$ for all $t\geq 0$. Since every solution of \eqref{ode} converges to an equilibrium (Lemma \ref{lemma:coop}(a)), each solution starting in $V$ converges to an equilibrium in $U$. By uniqueness of the equilibrium in $U$, this limit must be $\xx_*$. Therefore, every solution starting in $V$ converges to $\xx_*$. Hence $\xx_*$ is a LASE.
\end{proof}
Lemma \ref{lemma:stable} implies that any solution of the ODE initiated in $[0,\xx_*]$ will converge to the equilibrium point $\xx_*$ and hence $\xx=\mathbf{0}_N$ is in the domain of attraction of the LASE $\xx_*$. We use this fact later in Lemma \ref{lemma:high-prob}. The next lemma, based on \cite[Section 5.1]{kushner-yin}, argues about the convergence of the iteration \eqref{iter-nocomm} in the FDToP algorithm to an ODE and therefore to equilibria.
\begin{lemma}\label{lemma:conv-boundary}
    Under assumptions \ref{assu-delta}-\ref{assu-step}, the following two statements hold:
    \begin{enumerate}[label=(\alph*)]
        \item The iterates of Algorithm \ref{nocommalgo} asymptotically track the solutions of the ODE
        \begin{equation}\label{boundary-ode}
        \dot{\xx}(t)=h(\xx(t))+b(\xx(t)),
        \end{equation}
        with probability 1. Here $b(\xx(t))=\textbf{0}$ in $[0,B_1)\times\ldots\times[0,B_N)$ and $b_n(\xx(t))=-h_n(\xx(t))$ if $x_n(t)=B_n$ and $h_n(\xx(t))>0$ for some $n$.
        \item With probability 1, the iterates of Algorithm \ref{nocommalgo} converge to some equilibrium point $\hat{\xx}$ which will fall under one of the following two cases: 
        \begin{itemize}
            \item $u_n(\hat{\xx})=\tlambda_n, \hat{x}_n<B_n$ for all players $n$,
            \item $\exists \;n \;\textrm{s.t.}\; \hat{x}_n=B_n$. 
        \end{itemize}
    \end{enumerate}
\end{lemma}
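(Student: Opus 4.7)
The plan is to establish part (a) through the standard theory of projected stochastic approximation, and then combine this with the cooperative structure of the reflected ODE to obtain part (b).

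For part (a), I would first rewrite the iteration \eqref{iter-nocomm} in a convenient form. Using $y_n(t) = u_n(\xx(t)) + M_n(t)$ and the definition $h_n(\xx) = \tlambda_n - u_n(\xx)$, the update becomes
\[
x_n(t+1) = \Pi_{\XX_n}\bigl(x_n(t) + \eta(t)(h_n(\xx(t)) - M_n(t))\bigr),
\]
which is a projected Robbins-Monro scheme on the compact box $\XX$. Since $h$ is continuous (by the $C^1$ assumption on each $u_n$), the noise $\MM(t)$ is a martingale difference sequence with bounded support (Assumption \ref{assu-martingale}), and the stepsizes satisfy Assumption \ref{assu-step}, I would invoke the standard convergence theorem for projected stochastic approximations (\cite[Section 5.1]{kushner-yin}) to conclude that the piecewise-linear interpolation of the iterates asymptotically tracks solutions of the projected ODE \eqref{boundary-ode} with probability 1, where $b(\cdot)$ is exactly the minimal boundary correction that keeps trajectories inside $\XX$.

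For part (b), the central observation is that the reflected ODE inherits the cooperative structure of $\dot{\xx} = h(\xx)$: reflection only acts by pinning a coordinate at the upper boundary $x_n = B_n$ when $h_n > 0$, which preserves the Kamke-Muller monotonicity condition underlying Lemma \ref{lemma:coop}. Consequently, \eqref{boundary-ode} defines a monotone dynamical system on the compact box $\XX$. The next step is to argue that its equilibria are a.s.\ isolated: in the interior, the reasoning of Lemma \ref{lemma:stable} applies directly (Sard's lemma combined with Assumption \ref{assu-delta}(b) makes the Jacobian $Dh$ nonsingular at such equilibria); for boundary equilibria, I would apply the same Sard-type argument face-by-face, restricting to the lower-dimensional cooperative sub-system on the set of free coordinates. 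Given monotonicity, confinement of trajectories to $\XX$, and a.s.\ isolation of equilibria, a version of Hirsch's convergence theorem for cooperative systems \cite{coop-systems, coop-review} implies convergence of every trajectory of \eqref{boundary-ode} to a single equilibrium. Combining this with part (a) and the standard correspondence between the limit set of a stochastic approximation and the chain-recurrent set of its limiting ODE, the iterates of Algorithm \ref{nocommalgo} converge a.s.\ to some equilibrium $\hat{\xx}$ of \eqref{boundary-ode}.

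Characterizing $\hat{\xx}$ coordinate-wise closes out the lemma: since $h_n(\hat{\xx}) + b_n(\hat{\xx}) = 0$ and $b_n(\hat{\xx}) \neq 0$ only when $\hat{x}_n = B_n$, either $\hat{x}_n < B_n$ (forcing $h_n(\hat{\xx}) = 0$, i.e., $u_n(\hat{\xx}) = \tlambda_n$) or $\hat{x}_n = B_n$; if the former holds for every $n$, the first bullet of the lemma applies, otherwise the second. The main obstacle I expect is the combination of boundary non-smoothness with the cooperative convergence theory: although boundary reflection preserves monotonicity, the vector field is only piecewise smooth on $\XX$, so care is required to transfer Hirsch-type convergence, originally phrased for smooth cooperative ODEs, to the reflected flow. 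The cleanest route is to view each face where some coordinates are pinned at their upper bound as a lower-dimensional invariant manifold carrying its own cooperative ODE in the remaining free coordinates, and then stitch together the convergence claims across faces.
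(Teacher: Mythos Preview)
Your treatment of part (a) matches the paper exactly: both reduce the claim to the projected stochastic approximation framework of \cite[Section~5.1]{kushner-yin}, after noting that $h$ is continuous, the noise is a bounded martingale difference, and the stepsizes satisfy Assumption~\ref{assu-step}.

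For part (b), the paper does not actually supply a proof. It states the lemma, attributes it to \cite[Section~5.1]{kushner-yin}, and follows with a paragraph of commentary explaining why the projection term $b(\cdot)$ vanishes at the lower boundary (since $h_n(\xx)>0$ when $x_n=0$) and remarking that boundary equilibria may fail the QoS condition. Your proposal therefore goes well beyond what the paper offers: you fill in the convergence-to-a-single-equilibrium step by arguing that the reflected flow retains the Kamke monotonicity of $h$, that equilibria are a.s.\ isolated via a face-by-face Sard argument, and that Hirsch-type results for cooperative systems then force convergence. This is a reasonable route and the obstacle you flag (piecewise smoothness of the reflected field) is genuine; the paper simply sidesteps it by not proving the claim. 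One additional subtlety worth tightening in your outline: Hirsch's theorems (as in Lemma~\ref{lemma:coop}(a)) give convergence of \emph{trajectories} for an open dense set of initial conditions, whereas the stochastic-approximation correspondence only guarantees that the limit set of the iterates is an internally chain-transitive invariant set of \eqref{boundary-ode}. Closing that gap requires the stronger statement that, for the strongly cooperative reflected system on the compact box with a.s.\ isolated equilibria, every internally chain-transitive set reduces to a single equilibrium---plausible, but it deserves either a precise citation or a short direct argument rather than being folded into ``a version of Hirsch's theorem.''
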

The term $b(\cdot)$ is the \textit{projection term} and is the force required to keep $\xx(t)$ inside $\XX$ at all times. The second part of the lemma states the existence of equilibria at the boundary, which may or may not satisfy our QoS condition. The projection term $b(\cdot)$ does not appear at $0$, the lower boundary, because $u_n(\xx)=0$ if $x_n=0$ and hence $h_n(\xx)>0$ at the boundary for $x_n=0$. This implies that the driving vector field of $h$ points inward at $0$, so $b(\xx)=0$ at the lower boundary. For Algorithm \ref{commalgo}, neither of the boundary projections has an impact. Just like the explanation above, the projection at the lower boundary has no effect, and the upper boundary projections are followed by a reset, which restart the iteration. 

The next lemma lower bounds the probability that the iterates of \eqref{iter-comm} or \eqref{iter-nocomm}  remain in a small ball around an equilibrium $\xx_*$ from some time onward, if $\xx(t)=\mathbf{0}_N$ for some $t$. 
\begin{lemma}\label{lemma:high-prob}
    For a system satisfying assumptions \ref{assu-delta}-\ref{assu-step}, large enough $t'$, and small enough $\epsilon$, there exists $T=\mathcal{O}(\log(1/\epsilon))$ such that the iterates of Algo.\ \ref{commalgo} and \ref{nocommalgo} satisfy the following:
    $$P(\|\xx(t)-\xx_*\|\leq \epsilon, \; \forall t\geq t'+T+1 \mid \xx(t')=\mathbf{0}_N)\geq 1-\phi(t'),$$
    where the probability sequence $\phi(t)$ satisfies $\sum_t \phi(t)<\infty$.
\end{lemma}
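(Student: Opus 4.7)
The plan is to combine a finite-horizon concentration bound for the stochastic approximation iterates with a lock-in probability argument around the stable equilibrium $\xx_*$, using the fact, established in Lemmas \ref{lemma:coop} and \ref{lemma:stable}, that $\mathbf{0}_N$ lies in the domain of attraction of $\xx_*$. Note that conditioning on $\xx(t')=\mathbf{0}_N$ makes the reset mechanism in Algorithm \ref{commalgo} irrelevant for the analysis, so a single argument handles both algorithms (the boundary projection term $b(\cdot)$ from Lemma \ref{lemma:conv-boundary} is benign because $\xx_*\in\intXX$).

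First I would use Lemma \ref{lemma:coop}(b,c) together with Lemma \ref{lemma:stable} to pick $T>0$ (depending only on $\epsilon$) such that the solution $\xx^{\textrm{ODE}}(\cdot)$ of \eqref{ode} (or equivalently of \eqref{boundary-ode}, since the trajectory remains in $[\mathbf{0}_N,\xx_*]\subset\intXX\cup\{x_n=0\}$) starting from $\mathbf{0}_N$ satisfies $\|\xx^{\textrm{ODE}}(s)-\xx_*\|\leq\epsilon/4$ for all $s\geq T$. Next I would invoke the standard finite-horizon tracking estimate for projected stochastic approximation (e.g., the Gronwall-style argument of \cite[Ch.~2, Lemma~1]{Borkar-book} or \cite[Sec.~5.1]{kushner-yin}): conditional on $\xx(t')=\mathbf{0}_N$, with probability at least $1-\phi_1(t')$ one has
\begin{equation*}
\sup_{0\leq j\leq T_{t'}}\|\xx(t'+j)-\xx^{\textrm{ODE}}(\tau_j)\|\leq \epsilon/4,
\end{equation*}
where $\tau_j=\sum_{i=0}^{j-1}\eta(t'+i)$ and $T_{t'}$ is the smallest index for which $\tau_j\geq T$. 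The bound $\phi_1(t')$ arises from a maximal martingale concentration inequality (Azuma-Hoeffding, using the boundedness of the noise from Assumption \ref{assu-martingale}) applied to $\sum_{i=0}^{j-1}\eta(t'+i)\MM(t'+i+1)$, and has the form $C_1\exp(-c_1/\sum_{i\geq t'}\eta(i)^2)$.

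Second, at time $t'+T_{t'}$ the iterate lies in an $\epsilon/2$-ball of the LASE $\xx_*$. I would then apply a lock-in probability estimate (cf.\ \cite[Ch.~4, Theorem~1]{Borkar-book}) built around a local Lyapunov function $V$ for $\xx_*$: there exist constants so that starting from the $\epsilon/2$-ball one has $\|\xx(t)-\xx_*\|\leq\epsilon$ for all $t\geq t'+T_{t'}$ with probability at least $1-\phi_2(t')$, where again $\phi_2(t')=C_2\exp(-c_2/\sum_{i\geq t'}\eta(i)^2)$. Setting $\phi(t')=\phi_1(t')+\phi_2(t')$ and applying the union bound gives the claimed inequality (with $T$ absorbing $T_{t'}+1$, which is finite since $\eta(i)\to 0$).

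The remaining hurdle is verifying $\sum_{t}\phi(t)<\infty$. Both $\phi_i(t')$ are of the form $C\exp(-c/\sigma(t'))$ with $\sigma(t')=\sum_{i\geq t'}\eta(i)^2\to 0$ by Assumption \ref{assu-step}. For the representative choice $\eta(i)=(i+t_0)^{-\mu}$ with $\mu\in(1/2,1]$, $\sigma(t')=\Theta(t'^{\,1-2\mu})$ yields $\phi(t')=\mathcal{O}(\exp(-c\,t'^{\,2\mu-1}))$, which is summable; this also matches the rate quoted after Theorem \ref{main-thm-comm}. The main delicate step is checking that the tracking and lock-in bounds give this doubly exponential-in-tail form under the general Assumption \ref{assu-step} rather than only for specific step-size sequences; this is the point where I would invoke Assumption \ref{assu-step}'s monotonicity and square-summability explicitly, ensuring the Azuma bounds combine cleanly, and would state the conclusion for ``$\{\eta\}$ decaying sufficiently fast that $\sum_{t}\exp(-c/\sigma(t))<\infty$,'' which is the operative hypothesis behind the stated assumptions.
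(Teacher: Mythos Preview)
Your proposal is correct and follows the same high-level strategy as the paper: both arguments rest on the facts that $\xx_*$ is a LASE (Lemma~\ref{lemma:stable}), that $\mathbf{0}_N$ lies in its domain of attraction (Lemma~\ref{lemma:coop}), and that a concentration bound for stochastic approximation then controls the probability that the iterates ever leave an $\epsilon$-ball around $\xx_*$. The difference is one of packaging. The paper's proof is a three-line citation: it checks that $h$ is continuously differentiable and the noise is bounded, and then invokes Theorem~1.1 of \cite{Borkar-conc} (Thoppe--Borkar, via Alekseev's nonlinear variation-of-constants formula) as a black box. You instead reconstruct the bound from two more elementary pieces---a finite-horizon tracking estimate (Gronwall + Azuma) to reach a neighborhood of $\xx_*$, followed by a lock-in estimate in the style of \cite[Ch.~4]{Borkar-book} to stay there. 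Your decomposition is more transparent about where summability of $\phi(t)$ comes from; the paper's citation is shorter and avoids having to stitch the two phases together.

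Two small points to tighten. First, your $T_{t'}$ (the number of discrete steps needed to cover ODE-time $T$) grows without bound as $t'\to\infty$, so it cannot be absorbed into a fixed $T$ as the lemma is literally phrased; this is harmless for how the lemma is used downstream (only the existence of some $T$ for each $t'$ matters in the Borel--Cantelli step), but you should flag it. Second, your rate $\phi(t')=\mathcal{O}(\exp(-c\,t'^{\,2\mu-1}))$ does \emph{not} match the rate quoted after Theorem~\ref{main-thm-comm}, which has exponent $\mu/2$ rather than $2\mu-1$; the discrepancy reflects the different concentration machinery (Azuma on the raw martingale vs.\ the Alekseev-based bound in \cite{Borkar-conc}). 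Both rates are summable, so your conclusion stands, but drop the claim that they agree.
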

\begin{proof}
   Since $\xx_*$ lies in the interior of $\XX^o$, the boundary term in \eqref{boundary-ode} does not affect its stability properties. So, $\xx_*$ is a LASE for both ODEs \eqref{ode} and \eqref{boundary-ode} and $\mathbf{0}_N$ is in the domain of attraction for both these ODEs. As the map $h(\cdot)$ is continuously differentiable, and we have assumed that the Martingale difference sequence is bounded, we satisfy the assumptions of \cite{Borkar-conc}. The high probability result then follows from Theorem 1.1 of \cite{Borkar-conc}.
\end{proof}
The ``large enough'' values for $t',T$ and $1/\epsilon$ depend on the eigenvalues of the Jacobian at $\xx_*$, i.e., eigenvalues of the matrix $Dh(\xx_*)$. Since we care only about convergence, the only dependence that affects us is $t'$ as it relates to the stepsize choice. As a corollary, the result is valid from $t'=0$ onward if  $\eta(0)$ is small enough. Moreover, $\phi(0)\rightarrow 0$ as $\eta(0)\rightarrow 0$. 

 With the tools developed above, we can prove Theorem \ref{main-thm-comm}.
 
\begin{proof}[\textbf{Proof of Theorem \ref{main-thm-comm} (a)}]
    Let $\AAA(t')$ denote the negation of the set in the conditional probability in Lemma \ref{lemma:high-prob}: $$\AAA(t')=\{\exists \ t\geq t'+T+1 \ \text{s.t.} \|\xx(t)-\xx_*\|\geq \epsilon \}.$$ Then we know that $P(\AAA(t')\mid \xx(t')=\mathbf{0}_N)\leq \phi(t')$. The fact that $\phi(t)$ is summable implies that with probability 1 $$\sum_{t'} P(\AAA(t')\mid \xx(t')=\mathbf{0}_N)\II\{\xx(t')=\mathbf{0}_N\}<\infty.$$ Finally, through an extension of Borel-Cantelli Lemma \cite[Corollary 5.29]{Breiman}, we have that with probability 1:
    $$\sum_{t'}\II\{\AAA(t'),\xx(t')=\mathbf{0}_N\}<\infty.$$
    Hence, on the event $\{$\text{Resets happen infinitely often}$\}$, $\xx(t)$ from Algorithm \ref{commalgo} stays in an $\epsilon$-ball around $\xx_*$ after some time $t''$ with probability 1. Suppose the algorithm resets back to $\mathbf{0}_N$ infinitely often, then the iterates $\xx(t)$ would stay in an $\epsilon$-ball around $\xx_*$ after some time $t''$ with probability 1. This leads to a contradiction to our assumption that infinite resets happen. Therefore, the algorithm resets only finitely often. 
    
    Let $\tau$ denote the last such reset. Then for $t>\tau$, the iterates of \eqref{iter-comm} always stay in the interior of $\XX$. Then the iterates a.s.\ asymptotically track the solutions of the ODE \eqref{ode}. Hence, the actions of all players converge to an equilibrium point of the form $\hat{\xx}$ which satisfies $u_n(\hat{\xx})=\tlambda_n$ for all players $n$. 
\end{proof}

\begin{proof}[\textbf{Proof of Theorem \ref{main-thm-comm} (b) and (c)}]
       The algorithm is initialized at $\xx(0)=\mathbf{0}_N$, so for a small enough $\eta(0)$, Lemma \ref{lemma:high-prob} can be applied for $t'=0$. Thus, with probability $1-\phi(0)$, the iterates stay in the $\epsilon$-vicinity of $\xx_*$ from some $T$ onwards and converge to $\xx_*$ as $\xx_*$ is a LASE, where $T=\mathcal{O}(\log(1/\epsilon))$. We define $\varepsilon(\{\eta\})=\phi(0)$ and note that $\phi(0)\rightarrow 0$ as $\eta(0)\rightarrow 0$.
\end{proof}

The proof for Theorem \ref{main-thm-nocomm} is identical to the proof for part (b) of Theorem \ref{main-thm-comm}, but we repeat it for completeness. 

\begin{proof}[\textbf{Proof of Theorem \ref{main-thm-nocomm}}]
    Lemma \ref{lemma:conv-boundary} (b) implies that the iterates of Algorithm \ref{nocommalgo} converge to a point. As $\xx(0)=\mathbf{0}_N$, for small enough $\eta(0)$, Lemma \ref{lemma:high-prob} can be applied for $t'=0$. And hence with probability at least $1-\phi(0)$, the iterates stay in the $\epsilon$-vicinity of $\xx_*$ from some $T$ onwards and converge to $\xx_*$ as $\xx_*$ is a LASE, where $T=\mathcal{O}(\log(1/\epsilon))$. We define $\varepsilon(\{\eta\})=\phi(0)$ and note that $\phi(0)\rightarrow 0$ as $\eta(0)\rightarrow 0$. 
\end{proof}

\subsection{Analysis of the Meta-ToP Algorithm}
\begin{proof}[\textbf{Proof of Theorem \ref{thm:MC}}]
Define $\hat{\GG}=\{\gbold\in\GG^N\mid \exists \xx\in\intXX \; \text{s.t.}\;  u_n(\gbold,\xx)=\tlambda_n, \forall n\in\NN\}$, i.e., the set of game configurations in which an action profile exists such that QoS requirements of all players are satisfied. Then using Lemma \ref{lemma:coop}, all configurations $\hat{\gbold}\in\hat{\GG}$ are associated with a minimal equilibrium $\xx_*(\hat{\gbold})$. 

Then Lemma \ref{lemma:high-prob} shows that for $\hat{\gbold}\in\hat{\GG}$, and large enough $t',T$, and small enough $\epsilon$, 
    \begin{align*}
        &P\Big(\|\xx_g(t)-\xx_{*_g}(\hat{\gbold})\|\leq \epsilon, \; \forall t\geq t'+T+1\\
        &\;\;\;\;\;\;\;\;\;\;\;\;\;\;\;\;\;\;\;\;\;\;\Big| \; \xx_g(t')=\mathbf{0}_N,\gbold(t')=\hat{\gbold}\Big)\geq 1-\phi(t'),
    \end{align*}
    holds for all games $g\in\GG$. The probability sequence $\phi(t)$ satisfies $\sum_t \phi(t)<\infty$. Here $\xx_g(t)$ denotes the vector of actions for all players who are in game $g$ at time $t$, and $\xx_{*_g}(\hat{\gbold})$ denotes the subvector of $\xx_*(\hat{\gbold})$ for players in game $g$. Applying the union bound over all games gives us
        \begin{align*}
        &P\Big(\|\xx(t)-\xx_*(\hat{\gbold})\|\leq \epsilon\sqrt{K}, \; \forall t\geq t'+T+1\\
        &\;\;\;\;\;\;\;\;\;\;\;\;\;\;\;\;\;\;\;\;\;\;\Big| \; \xx(t')=\mathbf{0}_N,\gbold(t')=\hat{\gbold}\Big)\geq 1-K\phi(t').
    \end{align*}
    Now, similar to the argument in Theorem \ref{main-thm-comm}, let $\AAA_{\hat{\gbold}}$ be the set.
    $$\AAA_{\hat{\gbold}}(t')=\{\exists \ t\geq t'+T+1 \ \text{s.t.} \|\xx(t)-\xx_*(\hat{\gbold})\|\geq \epsilon\sqrt{K} \}.$$ Then $P(\AAA_{\hat{\gbold}}(t')\mid \xx(t')=\mathbf{0}_N, \gbold(t')=\hat{\gbold})\leq K\phi(t')$. The fact that $\phi(t)$ is summable implies that with probability 1 
    \begin{align*}
        &\sum_{t'} P\left(\AAA_{\hat{\gbold}}(t')\mid \xx(t')=\mathbf{0}_N,\gbold(t')=\hat{\gbold}\right)\\
        &\;\;\;\;\;\;\;\;\;\;\;\;\;\;\;\;\;\;\;\;\;\;\;\;\;\;\;\;\;\times \II\{\xx(t')=\mathbf{0}_N,\gbold(t')=\hat{\gbold}\}<\infty.
    \end{align*} Finally, through an extension of Borel-Cantelli Lemma \cite[Corollary 5.29]{Breiman}, we have that with probability 1:
    $$\sum_{t'}\II\{\AAA_{\hat{\gbold}}(t'),\xx(t')=\mathbf{0}_N,\gbold(t')=\hat{\gbold}\}<\infty.$$
    Hence, on the event \{Game configuration switches to $\hat{\gbold}$ infinitely often\}, $\xx(t)$ from Algorithm \ref{meta-algo} stays in an $\epsilon$-ball around $\xx_*(\hat{\gbold})$ after some time $t''$ with probability 1.

    Irrespective of which player triggers the switch at time $t$, the probability of reaching configuration $\gbold$ at time $t+1$ is bounded away from zero for all games $\gbold\in \GG^N$. And hence, if game switches happen infinitely often, all game configurations are chosen infinitely often. This implies that the configuration $\hat{\gbold}$ is chosen infinitely often if game switches happen infinitely often. Thus, the iterates $\xx(t)$ would stay in an $\epsilon$-ball around $\xx_*(\hat{\gbold})$ after some time $t''$ with probability 1, which contradicts our assumption that switches happen infinitely often. 
    
    Let $\tau$ denote the last such switch, and $\gbold(\tau)$ be the final game configuration. Note that $\gbold(\tau)\in\hat{\GG}$. Then for $t>\tau$, the actions $\xx(t)$ always stay in the interior of $\XX$. Then for each individual game $g$, the actions of players in that game a.s.\ asymptotically track the solutions of the ODE for that specific game $\dot{\xx}_{g}(t)=h^{(g)}(\xx_g(t))$ which satisfies Lemma \ref{lemma:coop}. Hence, all players converge to an equilibrium point of the form $\hat{\xx}$ which satisfies $u_n(\gbold(\tau),\hat{\xx})=\tlambda_n$ for all players $n$. This completes the proof for Theorem \ref{thm:MC}.
\end{proof}

\section{Simulations}\label{sec:sim}

We simulated the games from  Section \ref{sec:app}. Except for single-run plots, utility results are averaged over 100 independent runs, with the shaded region indicating the interquartile range (region between the first and third quartiles). Plots showing the percentage of converged runs depict the fraction of runs that have converged by time~$t$, computed over 500 runs.

\begin{figure*}[t]
\centering
\begin{subfigure}{.3\textwidth}
  \centering
  \includegraphics[width=0.99\linewidth]{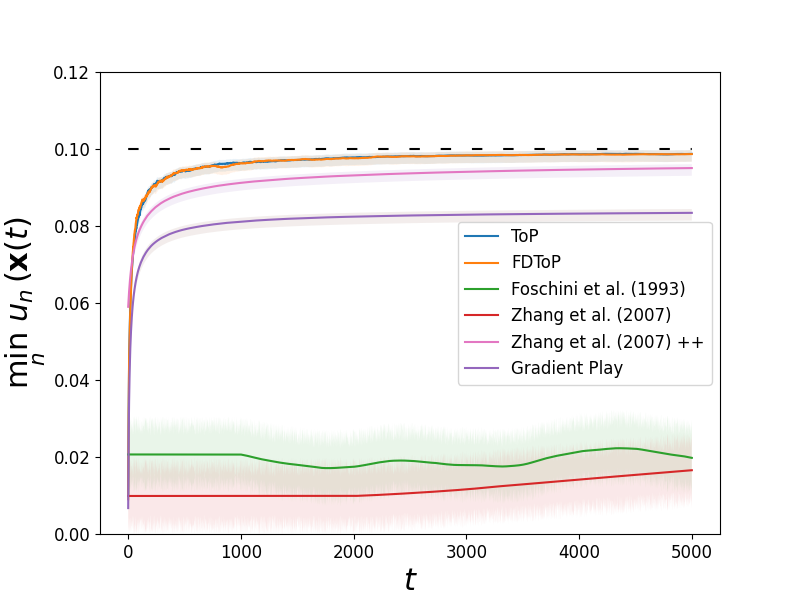}
  \caption{Comparison between algorithms}
  \label{fig:pc_comp}
\end{subfigure}%
\begin{subfigure}{.3\textwidth}
  \centering
  \includegraphics[width=.99\linewidth]{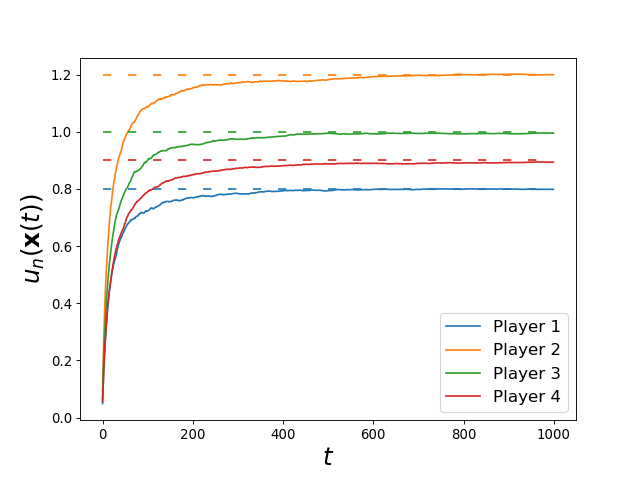}
  \caption{Performance of Tug-of-Peace Algorithm}
  \label{fig:pc_4}
\end{subfigure}
\begin{subfigure}{.3\textwidth}
  \centering
  \includegraphics[width=.99\linewidth]{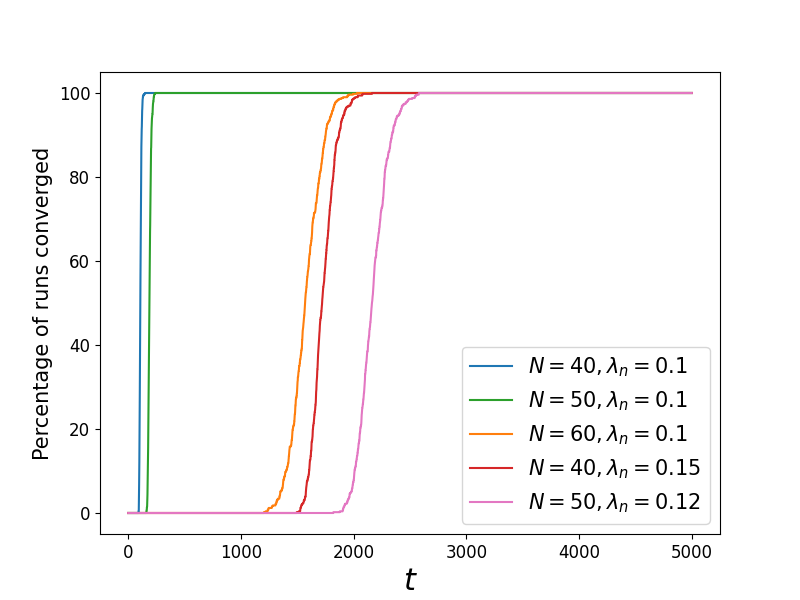}
  \caption{Percentage of runs converged}
  \label{fig:pc_ECDF}
\end{subfigure}
\caption{Single-channel power control game with (a) $N=50$, (b) $N=4$ players, and (c) varying number of players}
\label{fig:pc}
\end{figure*} 
\begin{figure*}[t]
\centering
\begin{subfigure}{.3\textwidth}
  \centering
  \includegraphics[width=0.99\linewidth]{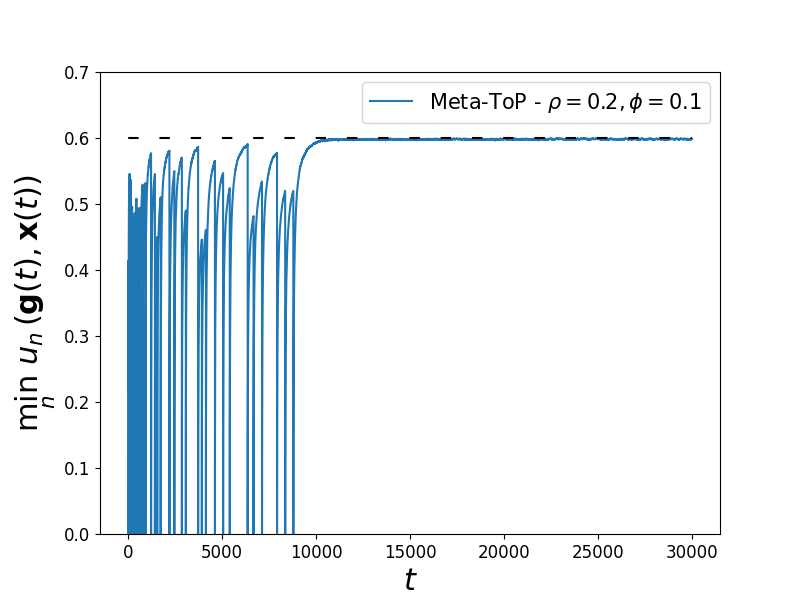}
  \caption{Convergence of Meta-ToP to QoS}
  \label{fig:multi_channel_single}
\end{subfigure}%
\begin{subfigure}{.3\textwidth}
  \centering
  \includegraphics[width=.99\linewidth]{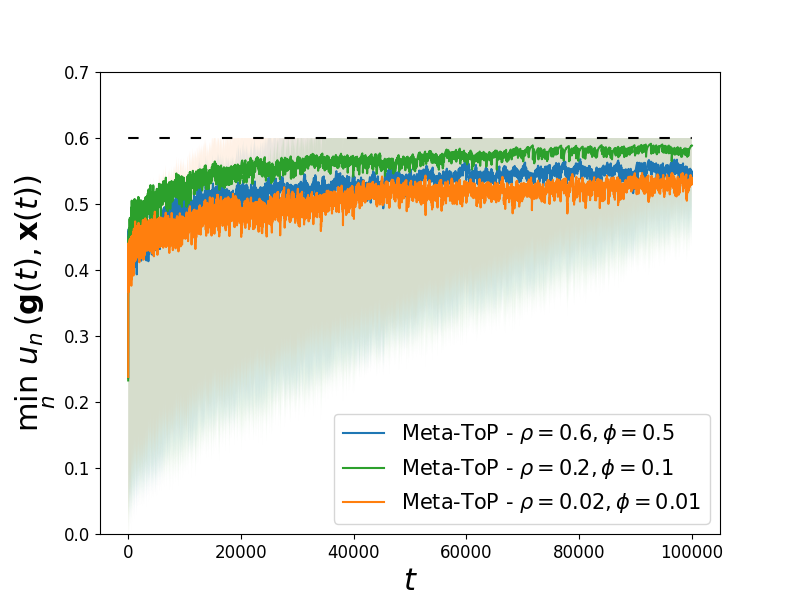}
  \caption{Effect of switching probabilities}
  \label{fig:multi_channel_avg}
\end{subfigure}
\begin{subfigure}{.3\textwidth}
  \centering
  \includegraphics[width=.99\linewidth]{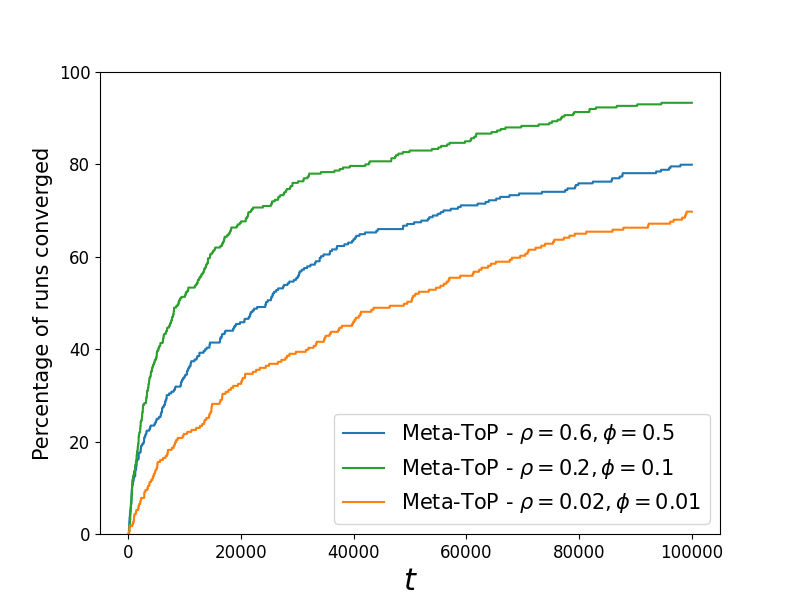}
  \caption{Percentage of runs converged}
  \label{fig:multi_channel_ECDF}
\end{subfigure}
\caption{Performance of Meta-ToP algorithm in multi-channel power control}
\label{fig:multi_channel}
\end{figure*} 

\subsection{Simulations for Power Control}
For the power control game, we first simulate our ToP algorithm for the single-channel case. We randomly generate a diagonal heavy channel gain matrix $C$ where the diagonal elements are uniformly and independently sampled from $[0.2,0.8]$ and the non-diagonal elements are uniformly and independently sampled from $[0,0.2]$. We set $N_0=0.1$. Fig. \ref{fig:pc_comp} compares our algorithms with power control algorithms with $N=50$ players and $\lambda_n=0.1$ for each player. We plot $\min_{n\in\NN} u_n(\xx(t))$, i.e., the reward of the player with the minimum reward or the reward of the most dissatisfied player at each time. The rewards observed are noisy with additive Gaussian measurement noise $\mathcal{N}(0,0.1)$. The upper boundary $B_n$ is set to be large ($B_n=1$) so that it has no effect. We use the stepsize sequence $\eta(t)=1/(t+100)$ for our algorithm. The dashed lines represent the QoS requirements.

The algorithms by \cite{power_foschini} (\textit{Foschini et al.} in Fig. \ref{fig:pc_comp}) and \cite{power_zhang} (\textit{Zhang et al.} in Fig. \ref{fig:pc_comp}) fail to converge in our case. While \cite{power_foschini} cannot handle noise, \cite{power_zhang} assumes an unbiased estimator of the inverse SINR, i.e., the inverse reward, which is unavailable to our algorithms. The curve \textit{Zhang et al.\ (2007)$++$} is of the setting in \cite{power_zhang}, i.e., where Gaussian noise is added to the inverse SINR. The curve \textit{Gradient Play} corresponds to the algorithm where each player performs gradient ascent on their own utility. Gradient play converges to an equilibrium that does not satisfy the QoS guarantees in this example. As expected, our algorithms converge to the minimal point that satisfies the QoS requirements.

Fig. \ref{fig:pc_4} plots the utilities of each player using the ToP algorithm for a power control game with $N=4$ and different QoS requirements $\blambda=[0.8,1.2,1,0.9]^T$. We use the stepsize sequence $\eta(t)=1/(t+10)^{0.9}$. The different dashed lines are the QoS requirements. The actions eventually converge to a profile that satisfies the QoS requirements for each player.

In Fig.\ \ref{fig:pc_ECDF}, we plot the rate of convergence for the ToP algorithm for different numbers of players and different QoS requirements. As expected, it takes longer for the players to converge as the number of players and the QoS requirements increase. The first two plots ($N=40, \lambda_n=0.1$ and $N=50, \lambda_n=0.1$) show that these problems are easier to solve and the players almost immediately converge to an action profile that satisfies the QoS requirements. The remaining QoS requirements are near the boundary of the feasibility region; no action profiles can achieve a utility of $0.16$ for all $N=40$ players, $0.13$ for all $N=50$ players, or $0.11$ for all $N=60$ players; hence, convergence takes longer.

For the multi-channel case (Fig.\ \ref{fig:multi_channel}), we simulate the Meta-ToP algorithm for $N=100$ players, $K=10$ channels, and $\lambda_n=0.6$ for each player. The rest of the parameters are set as in  Fig.\ \ref{fig:pc_comp}. In Fig.\ \ref{fig:multi_channel_single}, we see convergence of the Meta-ToP algorithm for a single run. There are multiple game switches in the beginning, leading to frequent returns to zero utility. A feasible game configuration is eventually obtained and the players' utilities converge to the QoS requirement.

 Fig.\ \ref{fig:multi_channel_avg} shows the performance of the Meta-ToP algorithm for three different switching probabilities $\rho, \varphi$. The best performance is obtained for  $\rho=0.2, \varphi=0.1$. Significantly larger probabilities ($\rho=0.6, \varphi=0.5$) lead to worse performance since almost random game allocations are being tried after each switch. Smaller probabilities ($\rho=0.02, \varphi=0.01$) also have worse performance as this results in very slow exploration. We study this in more depth in Fig.\ \ref{fig:multi_channel_ECDF}, where we plot the rate of convergence for different switching probabilities.

 \begin{figure*}[t]
\centering
\begin{subfigure}{.3\textwidth}
  \centering
  \includegraphics[width=0.99\linewidth]{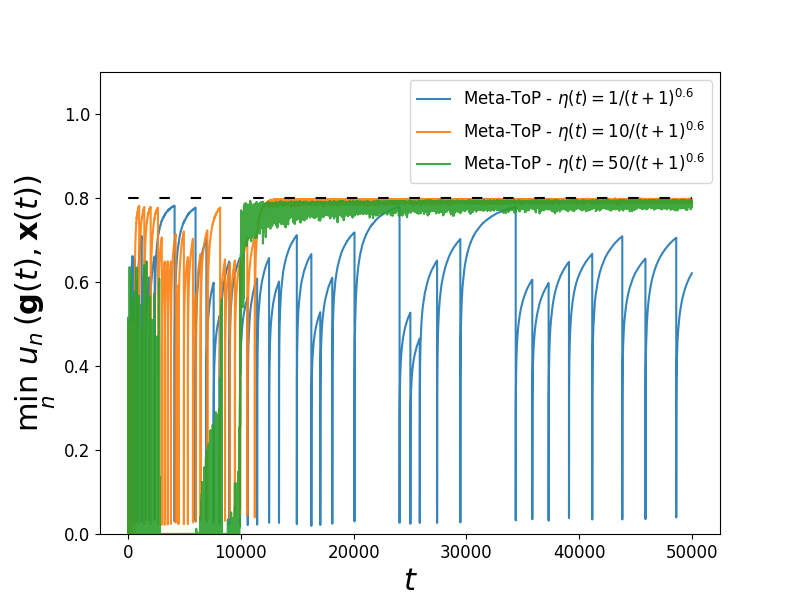}
  \caption{Effect of stepsize}
  \label{fig:task_stepsize}
\end{subfigure}%
\begin{subfigure}{.3\textwidth}
  \centering
  \includegraphics[width=.99\linewidth]{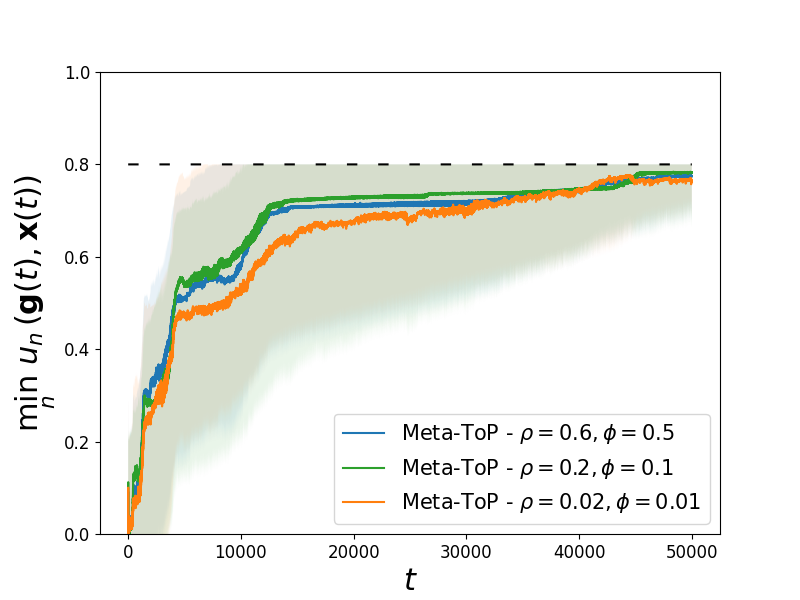}
  \caption{Effect of switching probabilities}
  \label{fig:task_probs}
\end{subfigure}
\begin{subfigure}{.3\textwidth}
  \centering
  \includegraphics[width=.99\linewidth]{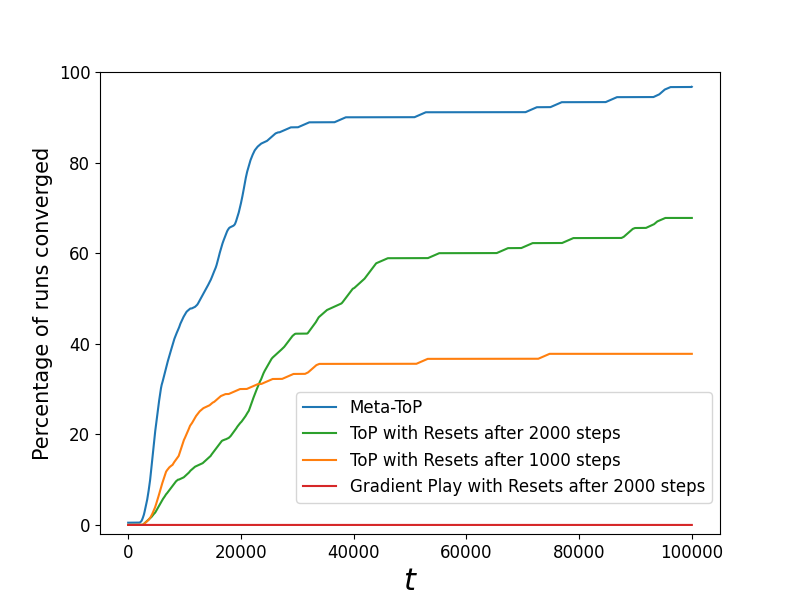}
  \caption{Percentage of runs converged}
  \label{fig:task_ECDF}
\end{subfigure}
\caption{Distributed task allocation for (a) Meta-ToP with $\rho=0.2,\varphi=0.1$, (b) Meta-ToP with $\eta(t)=10/(t+1)^{0.6}$, and (c) various algorithms}
\label{fig:task}
\end{figure*} 
\begin{figure*}[t]
\centering
\begin{subfigure}{.3\textwidth}
  \centering
  \includegraphics[width=0.99\linewidth]{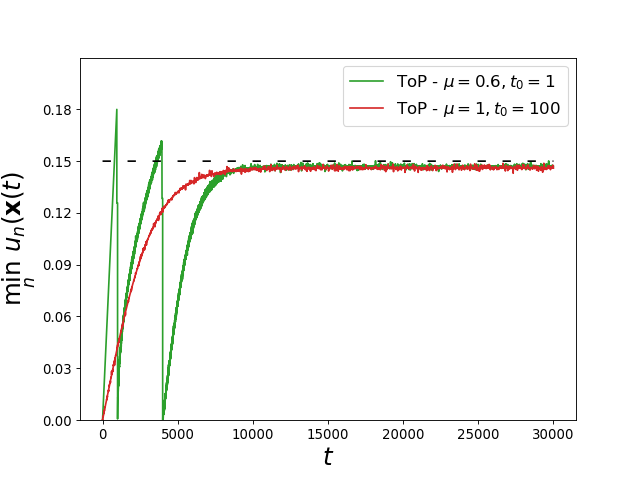}
  \caption{Effect of stepsize on ToP algorithm}
  \label{fig:sa_comm}
\end{subfigure}%
\begin{subfigure}{.3\textwidth}
  \centering
  \includegraphics[width=.99\linewidth]{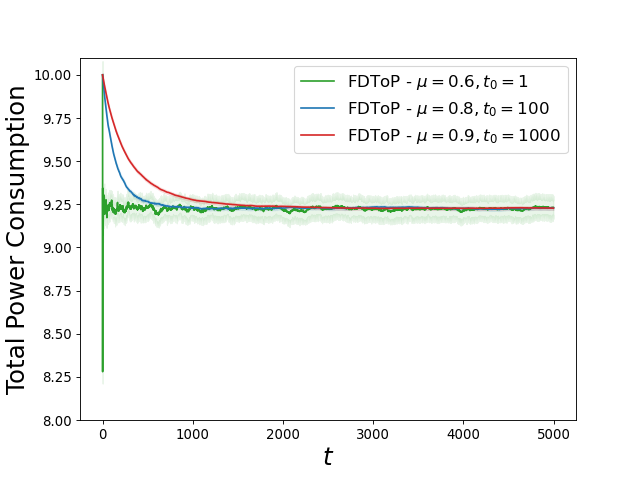}
  \caption{Effect of stepsize on FDToP algorithm}
  \label{fig:sa_nocomm}
\end{subfigure}
\begin{subfigure}{.3\textwidth}
  \centering
  \includegraphics[width=.99\linewidth]{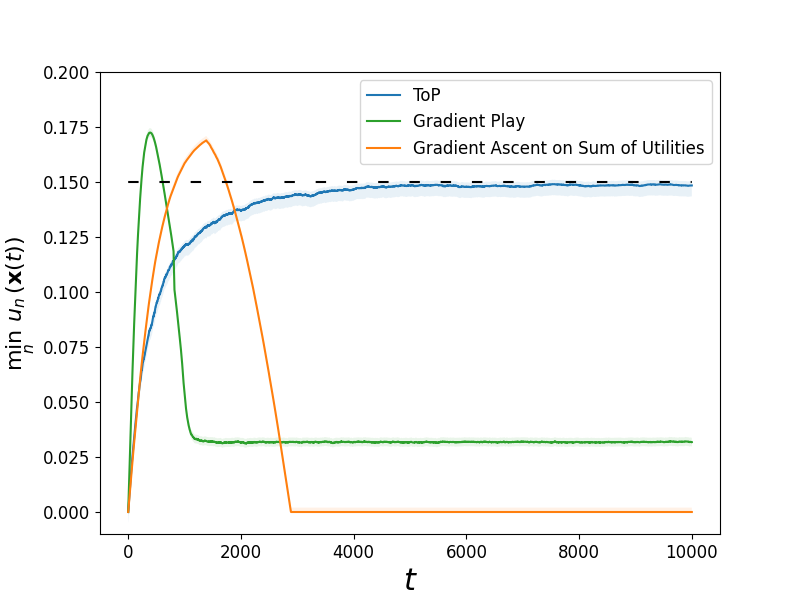}
  \caption{Comparison between algorithms}
  \label{fig:sa_compare}
\end{subfigure}
\caption{Performance of various algorithms in sensor activation game with stepsize $\eta(t)=1/(t+t_0)^{\mu}$}
\label{fig:sa}
\end{figure*} 

\subsection{Simulations for Distributed Task Allocation}

For distributed task allocation, we consider $N=100$ agents and $K=10$ tasks. The agents' efforts are assumed to be in $[0,10]$. The constants $\alpha_g$ and $\beta_{m,g}$ for the utility function are uniformly and independently sampled from $[1.1,5]$ and $[100,200]$, respectively. The QoS requirement for each player is $0.8$, i.e., $\lambda_n=0.8$ for all $n\in\NN$. We study the effect of stepsize sequences and the effect of switching probabilities on the performance of the Meta-ToP algorithm (Figure \ref{fig:task}).

Fig.\ \ref{fig:task_stepsize} shows the performance of the Meta-ToP algorithm in a single run for three different stepsize sequences. For a smaller stepsize $(\eta(t)=1/(t+1)^{0.6})$, it takes the players longer to reach boundaries, so they stay in infeasible game configurations for longer. Hence, fewer game switches happen, and convergence takes longer. For larger stepsizes $(\eta(t)=10/(t+1)^{0.6})$, game switches happen quicker and a game configuration in which QoS requirements can be met is found earlier. This leads to faster convergence. The convergence time did not improve significantly for larger stepsizes $(\eta(t)=50/(t+1)^{0.6})$. Although players try more game configurations, they reach the boundary before the algorithm can converge to an action profile that satisfies the QoS requirements, even in game configurations where it is feasible.

Fig.\ \ref{fig:task_probs}, shows performance of the Meta-ToP algorithm for three different switching probabilities. Unlike in the power control (Fig.\ \ref{fig:multi_channel_avg}), performance does not vary significantly with $\rho, \varphi$. The performance for $\rho=0.2, \varphi=0.1$ is still slightly better, but the convergence rate is similar in all three cases.

In Fig.\ \ref{fig:task_ECDF}, we compare the rate of convergence of our algorithm with two benchmarks. For our Meta-ToP algorithm, close to 90\% of runs converge by $t=25,000$, and 99\% of runs converge by $t=100,000$. In one benchmark algorithm, the actions are updated using the ToP algorithm, but the switching happens regularly in fixed intervals. Every $\tau$ timesteps, if not all players have achieved their QoS requirements, a new game configuration is randomly chosen. We tried this algorithm with multiple values of $\tau$ and selected the best-performing ones. While some runs converge for this algorithm, the performance is significantly worse than our Meta-ToP algorithm. As the second benchmark, we compare with an algorithm where actions are updated via gradient play, and switches are made in fixed intervals. This algorithm does not converge to an action profile where all players achieve their QoS requirements.

\subsection{Simulations for Activation in Sensor Networks}

 For the sensor activation game ($K=1$), we simulate a network with $N=10$ sensors. Each sensor has a set of multiple routes for sending packets to the destination, and we assume that a transmission is successful if there exists a route where all sensors are active. For each simulation, we randomly generate an Erdős–Rényi graph with edge probability $0.2$ for the sensors. This graph dictates all possible paths from each sensor to its destination. We choose $L=100$ packets and $f(p)=0.8\sqrt{p}$, $\alpha=0.8$ and $\beta=0.7$. The QoS requirements are chosen as $\lambda_n=0.15$ for all players $n\in\NN$.

In the simulations for the single-channel power control game (Subsection VI.A), the actions never got stuck at the boundary, and the equilibrium is unique. Here, since multiple equilibria exist, we studied the effects of the stepsize sequence on the resulting equilibrium (Fig. \ref{fig:sa}).

In Fig. \ref{fig:sa_comm}, we plot the performance of the ToP algorithm. The first plot is for a stepsize sequence which decreases slowly and has a higher initial value, i.e., $\eta(t)=1/(t+1)^{0.6}$. In this setting, the rewards quickly increase in the beginning but one of the players reaches the boundary, causing a reset. This happens twice before the algorithm stabilizes and converges to the minimal equilibrium. The second plot is for a quickly decreasing stepsize sequence with a lower initial value $(\eta(t)=1/(t+100))$. In this case, the actions slowly increase and directly reach the minimal equilibrium point. 

The second plot (Fig. \ref{fig:sa_nocomm}) shows the total power consumed by all sensors when they run the FDToP algorithm with different stepsizes. Stepsizes with higher initial values start with large oscillations but eventually, reach close to the minimal equilibrium point faster. On the other hand, quickly decreasing stepsizes with lower initial values are more stable but take longer to converge. Nevertheless, for all different stepsize sequences, the algorithm still converged to the same equilibrium, which was the minimal equilibrium. 

In the sensor activation game, there are many more equilibrium points of our ODEs, including many at the boundary. But in the extensive simulations we performed, the algorithm always converged to the minimal equilibrium point. We tried various levels of noise (with independent additive noise), and stepsize sequences that go down very slowly (e.g., $\eta(t)=1/\lceil{t/100\rceil}^{0.51}$). Even in these conditions, both our algorithms converged to the minimal equilibrium point, despite the initial oscillations. These empirical findings suggest that in special cases of ToW games, the convergence properties can be stronger than our general theoretical guarantees. 

In Fig.\ \ref{fig:sa_compare}, we compare the performance of the ToP algorithm with two gradient-based algorithms. The first algorithm is gradient play, where each player updates their action using gradient ascent on their own utility. The second algorithm is where players update their action using the gradient of the sum of utilities. While our algorithm converges to an action profile where each player achieves their QoS requirements, the gradient-based algorithms converge to a point where at least one player achieves significantly less than their QoS requirement. In both gradient-based algorithms, most players either chose the action $B=1$ or the action $0$ at equilibrium.


\section{CONCLUSIONS}
We studied a class of meta games where players decide the game they wish to participate in, and their action in that game. The rewards of other players in the same game decrease when a player increases their effort. We therefore called these games `Tug-of-War'' (ToW) games, which can model wireless power control, distributed task allocation, and sensor activation.

We proposed a simple stochastic approximation algorithm that players can use to converge to a point that satisfies their QoS requirements if such a point is feasible. We proved that for Meta-ToW games, our simple algorithm (Meta-ToP) converges to such a desirable point with probability 1. Moreover, for single ToW games, the algorithm converges to the point where the players' actions are ``minimal'' with high probability, which is useful when the action represents power or energy. We demonstrate the performance of our algorithm through simulations.

Analyzing ToW games with asynchronous players can remove the dependence on slotted time. It is also an open question what the fastest convergence in ToW games is and whether the ToP achieves it. Studying better mechanisms of switching between games is another interesting direction.

\section*{References}
\bibliographystyle{ieeetr}
\bibliography{root.bib}

\end{document}